\newcommand{\Oh}{\mathcal{O}}
\newcommand{\opsquare}{op-square}
\newcommand{\op}{\approx}
\newcommand{\iprev}[1]{\textsf{prev}_{#1}}
\newcommand{\ikstart}[2]{\textsf{begin}_{#1,#2}}
\newcommand{\ikend}[2]{\textsf{end}_{#1,#2}}
\DeclareMathOperator{\code}{code}
\DeclareMathOperator{\prev}{prev}
\DeclareMathOperator{\cnt}{cnt}
\DeclareMathOperator{\fingerprint}{fingerprint}
\newtheorem{theorem}{Theorem}[section]
\newtheorem{lemma}[theorem]{Lemma}
\newtheorem{definition}[lemma]{Definition}
\newtheorem{proposition}[lemma]{Proposition}
\newcommand\thankssymb[1]{\textsuperscript{\@fnsymbol{#1}}}
\title{Order-Preserving Squares in Strings}
\date{}
\author[1]{Pawe\l{} Gawrychowski}
\author[2]{Samah Ghazawi\thanks{Partially supported by the Israel Science Foundation grant 1475/18, and Grant No. 2018141 from the United States-Israel
Binational Science Foundation (BSF).}}
\author[2,3]{Gad M. Landau\thankssymb{1}}
\affil[1]{Institute of Computer Science, University of Wroc\l{}aw, Poland}
\affil[2]{Department of Computer Science, University of Haifa, Israel}
\affil[3]{NYU Tandon School of Engineering, New York University, Brooklyn, NY, USA}
\begin{document}
\maketitle
\begin{abstract}
An order-preserving square in a string is a fragment of the form $uv$ where $u\neq v$ and $u$ is order-isomorphic to $v$.
We show that a string $w$ of length $n$ over an alphabet of size $\sigma$ contains $\Oh(\sigma n)$ order-preserving squares
that are distinct as words. This improves the upper bound of $\Oh(\sigma^{2}n)$ by Kociumaka, Radoszewski, Rytter, and Waleń
[TCS 2016]. Further, for every $\sigma$ and $n$ we exhibit a string with $\Omega(\sigma n)$ order-preserving squares
that are distinct as words, thus establishing that our upper bound is asymptotically tight. Finally, we design
an $\Oh(\sigma n)$ time algorithm that outputs all order-preserving squares that occur in a given string and
are distinct as words. By our lower bound, this is optimal in the worst case.
\end{abstract}

\section{Introduction}
A natural definition of repetitions in strings is that of squares, which are fragments of the form $uu$, where $u$ is a string.
The study of repetitions in strings goes back at least to the work of Thue from 1906~\cite{thue}, who constructed an infinite
square-free word over the ternary alphabet. Since then, multiple definitions of repetitions have been proposed
and studied, with the basic question being focused on analyzing how many such repetitions a string of length $n$ can contain.
Of course, any even-length fragment of the string $\texttt{a}^{n}$ is a square, therefore we would like to count distinct
squares.
Using a combinatorial result of Crochemore and Rytter~\cite{Crochemore95}, Fraenkel and Simpson~\cite{FS} proved that a string of length $n$ contains at most $2n$ distinct squares (also see a simpler proof by Ilie~\cite{Ilie05}). They also provided an infinite family of strings of length $n$ with $n-o(n)$ distinct squares.
For many years, it was conjectured that the right upper bound is actually $n$. Interestingly, a proof of the conjecture for the binary alphabet would imply it for any alphabet~\cite{ManeaS15}. 
Very recently, after a series of improvements on the upper bound~\cite{Ilie07,Lam,DEZA,Thierry2020},
the conjecture has been finally resolved by Brlek and Li~\cite{Brlek}, who showed an upper bound of $n-\sigma+1$,
where $\sigma$ is the size of the alphabet.

For many of the applications, it seems more appropriate to work with different definitions of equality, giving us different notions of squares.
Three interesting examples are (1) Abelian squares~\cite{CS,Kociumaka,erdos,AAE,HUOVA,Asquare,Kociumaka2,pleasants} (also called Jumbled squares) that are of interest in natural language processing applications and in other domains where the classifications strongly depend on feature sets distribution, as opposed to feature sequences distributions. (2) Parameterized squares~\cite{Kociumaka} that are considered in applications for finding identical sections of code. (3) Order-preserving squares~\cite{Kociumaka,CROCHEMORE2016122,GOURDEL2020} that are important for applications of stock price analysis and musical melody matching.

The combinatorial properties of the three types of squares were studied by Kociumaka et al.~\cite{Kociumaka}. Given a string of length $n$ over an alphabet of size $\sigma$, first the authors bounded the number of distinct as words abelian squares by $\Theta(n^2)$. Second, bounded the number of distinct as words parameterized squares by $\Oh(2(\sigma!)^2n)$ and bounded the number of nonequivalent parameterized squares (see definition within) by $\Oh(2\sigma!n)$.
Third, the authors provided $\Oh(\sigma^2n)$ bound for the number of distinct as words order-preserving squares.

From an algorithmic perspective, various algorithms were proposed for computing abelian squares and order-preserving squares in a string of length $n$.
Cummings and Smyth~\cite{CS} proposed an $\Theta(n^2)$ time algorithm for computing all substrings that consist of a concatenation of two or more abelian-equivalent substrings. Kociumaka et al.~\cite{Kociumaka2} proposed an algorithm for computing the longest, the shortest, and the number of all abelian squares in $\Oh(n^2/\log^2n)$ time using linear space.
 Gourdel et al.~\cite{GOURDEL2020} proved that all nonshiftable order-preserving squares (see definition within) can be computed in $\Oh(n \log n)$ time. Additionally, Crochemore et al.~\cite{CROCHEMORE2016122} proposed the \textit{incomplete} order-preserving suffix tree (see details within), denoted by $T$, that enables order-preserving pattern matching queries in time proportional to the pattern length. $T$ can be constructed in $\Oh(n\log\log n)$ expected time and $\Oh(n\log^2\log n/\log\log\log n)$ worst-case time. Moreover, the authors proved that using $T$, all occurrences of order-preserving squares can be computed in $\Oh(n\log n+occ)$ time, where $occ$ is the total number of occurrences of order-preserving squares. Note that, the number of all occurrences of order-preserving squares might be unreasonably high.
In particular, every regular square is considered to be an order-preserving square, hence $\texttt{a}^{n}$ contains $\Theta(n^{2})$
occurrences of order-preserving squares.
Henceforth, a more natural approach is to generate only order-preserving squares that are distinct as words. 
\subparagraph*{Our results.}
In this paper, we focus on order-preserving squares. Same-length strings $u$ and $v$ over an ordered alphabet are order-isomorphic,
denoted $u\op v$, when the order between the characters at the corresponding positions is the same in $u$ and $v$. 
For example, $u=acb$ and $v=azd$ are order-isomorphic, assuming $a<b<c<d<z$. In this paper, order-preserving squares are
strings of the form $uv$, where $u\op v$ and additionally $u\neq v$.

The main result of our paper is that the number of order-preserving squares in a string of length $n$ over an alphabet $\sigma$ is $\Oh(\sigma n)$.
This improves the bound of $\Oh(\sigma^{2}n)$ by Kociumaka et al.~\cite{Kociumaka}.
We stress that in our definition of an order-preserving square, we require that $u\neq v$, while Kociumaka et al.~\cite{Kociumaka} counted
fragments of the form $uv$, where $u\op v$, that are distinct as words. We believe that our definition is more natural in the context
of this paper.
At the same time, by the result of Brlek and Li~\cite{Brlek} a string of length $n$ contains
less than $n$ fragments of the form $uu$ that are distinct as words, thus our result implies that the number of fragments $uv$
such that $u\op v$ that are distinct as words is also $\Oh(\sigma n)$.
We complement our upper bound by designing, for each $\sigma$, an infinite family of strings of length $n$ over an alphabet of size $\sigma$
containing $\Theta(\sigma n)$ such fragments.
We begin with describing the lower bound in Section~\ref{lowerop}, and then present the upper bound in Section~\ref{upperop}.

\begin{theorem}
\label{bound}
The number of order-preserving squares in a string of length $n$ over an alphabet of size $\sigma$ is $\Oh(\sigma n)$,
and this bound is asymptotically tight even if we only consider order-preserving squares that are distinct as words.
\end{theorem}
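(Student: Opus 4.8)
The plan is to prove the two halves separately, and I would start with the lower bound since it only asks for one well-chosen family of examples.

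\textbf{Lower bound.} For parameters $\sigma\ge 2$ and $n$ I would take the string
\[
w \;=\; 1^{k}\,2^{k}\,1^{k}\,3^{k}\,1^{k}\,4^{k}\cdots 1^{k}\,\sigma^{k}\,1^{k},
\qquad k=\Theta(n/\sigma),
\]
padding with extra $1$'s to reach length exactly $n$. Inside $w$ I would exhibit, for every $j\ge 1$, every $d$ with $2\le d$ and $d+2j-1\le\sigma$, and every $a$ with $0\le a\le k$, the word
\[
x_{j,d,a} \;=\; 1^{a}\,d^{k}\,1^{k}\,(d{+}1)^{k}\,1^{k}\,(d{+}2)^{k}\cdots 1^{k}\,(d{+}2j{-}1)^{k}\,1^{\,k-a}.
\]
Three routine verifications finish this part: (i) $x_{j,d,a}$ occurs in $w$, since the $2j$ consecutive value-blocks $d^{k},\dots,(d{+}2j{-}1)^{k}$ sit consecutively in $w$ separated by blocks $1^{k}$ and flanked by a block $1^{k}$ on each side, from which we carve the prefix $1^{a}$ and the suffix $1^{\,k-a}$; (ii) writing $x_{j,d,a}=uv$ by cutting it in half, both halves have the same shape (``$1^{a}$, then $j$ strictly increasing constant blocks at matching positions, interleaved and padded with $1$'s''), so $u\op v$, while the actual values ($d$ versus $d{+}j$) differ, so $u\neq v$; (iii) the words $x_{j,d,a}$ are pairwise distinct, because from $x_{j,d,a}$ one reads off $a$ as the length of the leading run of $1$'s, then $2j$ as the number of non-$1$ blocks, then $d$ as the first non-$1$ value. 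Counting gives $\sum_{j\ge 1}(\sigma-2j)(k+1)=\Theta(\sigma^{2}k)=\Theta(\sigma n)$ distinct order-preserving squares. For the few small $\sigma$ where the quadratic term does not yet dominate, one just notes $\sigma=\Oh(1)$, so $\Omega(\sigma n)=\Omega(n)$, and e.g.\ $1^{\lfloor n/3\rfloor}2^{\lfloor n/3\rfloor}1^{\lfloor n/3\rfloor}$ already has $\Omega(n)$ order-preserving squares of the form $1^{t}2^{t}$ and $2^{t}1^{t}$.

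\textbf{Upper bound.} For $\Oh(\sigma n)$ I would follow the template used to count distinct ordinary squares (Crochemore--Rytter, Fraenkel--Simpson, Ilie): assign to each order-preserving square $x=uv$ occurring in $w$ the starting position of its rightmost occurrence, and prove that at most $\Oh(\sigma)$ distinct order-preserving squares get assigned to any fixed position, so that summing over the $n$ positions yields $\Oh(\sigma n)$. To control a single position $i$, note that all order-preserving squares with rightmost occurrence starting at $i$ are nested prefixes of the suffix of $w$ starting at $i$, and each is an ``order-period'' of $w$ in the sense that its two halves are order-isomorphic. Two ingredients then give the bound $\Oh(\sigma)$: first, an order-preserving analogue of the three-squares lemma, controlling how densely such order-periods of comparable length can be nested; second, the observation that since $u\op v$ but $u\neq v$, the two halves differ only through their ``global'' context — recording the first position of $v$ that is not determined within $v$ alone, together with the alphabet symbol at the corresponding position, attaches to each such square one element of $\{1,\dots,\sigma\}$, and this is the source of the extra factor $\sigma$ over the classical $\Oh(n)$ bound. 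I would set this up through a predecessor-type encoding capturing order-isomorphism (each position storing the distances to its previous smaller-or-equal and previous greater-or-equal neighbour), turning ``$u\op v$'' into ``the encoding of $v$ read in isolation repeats the encoding of $u$'', so order-periods become genuine repetitions and the usual periodicity toolkit becomes available.

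\textbf{Main obstacle.} The hard part is the upper bound, and within it the order-preserving three-squares lemma together with making the charging precise. Order-isomorphism interacts far less cleanly with periodicity than plain equality: an order-period of length $p$ does not compose with one of length $p'$ the way string periods do under Fine--Wilf, and the boundary positions of $v$ that reach back into $u$ must be tracked carefully both to certify $u\neq v$ and to guarantee that distinct squares sharing a position really do receive distinct (position, value) charges up to a constant. Producing a clean combinatorial statement here — most plausibly phrased in terms of the blocks of the predecessor encoding, with a fingerprint/counting argument over those blocks — is where the real work lies; the lower bound, by contrast, is a direct construction (only slightly fiddly for tiny $\sigma$).
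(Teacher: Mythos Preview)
Your lower-bound construction differs from the paper's (they simply take the monotone string $1^{k}2^{k}\cdots\sigma^{k}$ and observe that every even-length window is an \opsquare) but it is correct and yields the same $\Omega(\sigma n)$ count.

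The upper bound, however, has a real gap, and not merely because the details are left out. The plan to pass to a predecessor-type encoding so that ``order-periods become genuine repetitions and the usual periodicity toolkit becomes available'' does not work: when $u\op v$, the encoding of $uv$ has the encoding of $u$ as a prefix, but its second half is \emph{not} the encoding of $v$, because the early positions of $v$ encode relative to characters back in $u$. So op-periods are not ordinary periods under any such encoding, Fine--Wilf fails for them, and there is no three-squares lemma to import. Your second pillar, charging each \opsquare\ prefix to ``one element of $\{1,\dots,\sigma\}$'', is also not injective as stated; the known injection (Kociumaka et al.) assigns a \emph{pair} of leftmost-occurrence symbols and yields only the $\binom{\sigma}{2}$ bound you are trying to beat, and nothing in your sketch prevents two \opsquare s at the same position from receiving the same single symbol.

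The paper's route avoids a three-squares lemma entirely. For a fixed suffix it groups the \opsquare\ prefixes into dyadic length classes $O_{k}$ and the leftmost occurrences into dyadic position classes $L_{k}$, and proves $|O_{k}|=\Oh(|L_{k-2}|)$ whenever $|O_{k}|$ is large. The mechanism is: two close \opsquare s in $O_{k}$ give $u$ an op-border and hence a decomposition $u=b_{1}\cdots b_{f}b_{f+1}$ into $\Theta(|O_{k}|)$ order-isomorphic blocks; a third \opsquare\ forces a leftmost occurrence into some $b_{j}$ with $j\ge2$; and then a delicate successor-chasing argument inside $b_{1}$ (the non-obvious lemma, and the genuine new idea) shows that \emph{every} block $b_{1},\dots,b_{f}$ contains a leftmost occurrence. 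Summing $\sum_{k}|L_{k}|\le\sigma$ finishes. This ``leftmost occurrence in every block'' propagation is precisely the missing ingredient in your outline, and it is not a periodicity statement at all.
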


Next, we design an algorithm for reporting all order-preserving squares in a given string of length $n$ over an alphabet $\sigma$
in $\Oh(\sigma n)$ time, which (by our lower bound) is asymptotically optimal in the worst case. 
We again stress that in our definition of an order-preserving square, we require that $u\neq v$.
However, all fragments of the form $uu$ that are distinct as words can be reported in $\Oh(n)$ by the algorithm of Gusfield and Stoye~\cite{GUSFIELD},
Thus, for $\sigma = o(\log n)$, this resolves one of the open questions by Crochemore et al.~\cite{CROCHEMORE2016122}, who asked
if there is an $o(n\log n)$ time algorithm for finding the longest order-preserving square.
This is described in Section~\ref{algo}.

\begin{restatable}{theorem}{reporting}
\label{thm:reporting}
All order-preserving squares in a string of length $n$ over an alphabet of size $\sigma$ can be found in $\Oh(\sigma n)$ time.
\end{restatable}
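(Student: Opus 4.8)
The plan is to follow the same combinatorial structure used to prove the $\Oh(\sigma n)$ upper bound in Theorem~\ref{bound}, but to make every step constructive and charge $\Oh(1)$ (or amortized $\Oh(1)$) work to each square that the argument produces. Recall that in the order-preserving setting the role played by the previous occurrence of a letter in classical square-counting is taken by the \emph{order-preserving encoding}: for each position $i$ we record, say, $\prev(i)$ as the nearest position to the left whose character is the largest one not exceeding $w[i]$ (and symmetrically the nearest larger one). A fragment $w[i..j]$ is then determined up to order-isomorphism by these pointers restricted to the fragment, with the fragment's own first occurrences of each of its $\sigma$ distinct values acting as the only ``anchors''. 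I would first spend $\Oh(\sigma n)$ time building, for every position, these $\prev$/next pointers, which is immediate after one left-to-right sweep maintaining a sorted structure of size $\sigma$.

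Next I would classify order-preserving squares by their \emph{center}, exactly as in the upper-bound proof. Fix a center $c$; an op-square of half-length $\ell$ with center $c$ occupies $w[c-\ell..c+\ell-1]$ and satisfies $w[c-\ell..c-1]\op w[c..c+\ell-1]$. The key structural fact, established in Section~\ref{upperop}, is that for a fixed left endpoint (or a fixed center) the set of admissible half-lengths is controlled by how the $\prev$-pointers cross the center, and the total number of such squares summed over all centers is $\Oh(\sigma n)$. Algorithmically, I would process centers left to right and maintain the op-encoding of the growing left part $w[\,\cdot\,..c-1]$ and right part $w[c..\,\cdot\,]$ using the standard Morris--Pratt/Main--Lorentz style extension: the candidate half-lengths for a given center form $\Oh(\sigma)$ arithmetic-progression-like bundles (one bundle per ``reason a longer square fails'', which is a crossing $\prev$-pointer, of which there are $\Oh(\sigma)$ kinds), and within each bundle the squares that are genuinely realized can be enumerated in time proportional to their number. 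Here I would reuse the Longest Common Extension machinery specialized to order-isomorphism from Gourdel et al.~\cite{GOURDEL2020} and Crochemore et al.~\cite{CROCHEMORE2016122} — but only as an $\Oh(1)$-amortized primitive after an $\Oh(\sigma n)$ preprocessing, not through the $\Oh(n\log n)$ suffix-tree route, since we only need op-equality of two equal-length windows that grow monotonically.

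To make the output ``distinct as words'' rather than ``distinct as occurrences'', I would, after generating all occurrences grouped by center, deduplicate using the op-encoding as a canonical signature: two occurrences represent the same word iff they have equal length and the same sequence of $\prev$-pointers normalized to start of the fragment, which by a rolling-fingerprint argument (the \fingerprint{} operator the paper sets up) can be compared in $\Oh(1)$ after $\Oh(n)$ preprocessing; collisions are verified in $\Oh(\sigma)$ extra time by a direct check, so the total stays $\Oh(\sigma n)$. Finally I would discard the squares with $u=v$, which is a single op-equality-plus-equality test per reported item, and separately invoke Gusfield--Stoye~\cite{GUSFIELD} to recover, if desired, the $uu$ fragments in $\Oh(n)$.

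The main obstacle I anticipate is the extension step: in the classical (identity) setting, extending a square's half-length by one costs $\Oh(1)$ because only one new character pair must be compared, but for order-isomorphism, inserting a character into the left window or the right window can retroactively change the $\prev$-pointer of a character already inside, so op-equality of the two windows is \emph{not} monotone under naive extension and the LCE primitive must be rebuilt carefully. The resolution, which I expect to be the technical heart of Section~\ref{algo}, is that the number of such ``disruptions'' as the center sweeps across the string is itself $\Oh(\sigma n)$ — each disruption corresponds to a $\prev$-pointer crossing the center — so an amortized analysis closes the gap, but setting up the right potential function and the data structure supporting these updates in amortized $\Oh(1)$ is where the real work lies. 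A secondary subtlety is bundling the failed half-lengths into $\Oh(\sigma)$ arithmetic progressions per center so that the enumeration cost is genuinely proportional to the number of squares and not to the number of candidate lengths; this mirrors the periodicity arguments in the upper-bound proof and should port over with the op-encoding replacing raw characters.
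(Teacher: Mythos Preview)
Your proposal diverges from the paper's approach in essentially every technical component, and several of the steps you sketch have genuine gaps.

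First, the decomposition is different. The paper does \emph{not} work center-by-center in Main--Lorentz style; it works suffix-by-suffix, exactly as in the upper-bound proof, and for each suffix $s=w[i..n]$ generates $\Oh(\sigma)$ candidate prefixes that might be {\opsquare}s. The candidate generation is driven by the groups $L_k$ of leftmost occurrences: one shows (Lemma~\ref{smallerorlargest}) that for any {\opsquare} prefix $s[1..2\ell]$ the right arm contains a leftmost occurrence that is the smallest or largest element of its group $L_k$, and then (Lemma~\ref{fewoccurrences}) that the number of fragments of length $2^{k-1}$ containing that occurrence and order-isomorphic to $s[1..2^{k-1}]$ is $\Oh(1+|L_{k-2}|)$. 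Your claim that the admissible half-lengths for a fixed \emph{center} form ``$\Oh(\sigma)$ arithmetic-progression-like bundles'' is not what Section~\ref{upperop} proves; the upper bound is per left endpoint, and porting it to centers is not automatic.

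Second, and more seriously, you wave away the constant-time op-equality test. You write that you would use the LCE machinery of \cite{GOURDEL2020,CROCHEMORE2016122} ``only as an $\Oh(1)$-amortized primitive after $\Oh(\sigma n)$ preprocessing, not through the $\Oh(n\log n)$ suffix-tree route.'' But neither reference gives such a primitive in $\Oh(\sigma n)$ preprocessing, and the paper devotes Lemma~\ref{op_suffix_tree} precisely to showing that the order-preserving suffix tree \emph{can} be built in $\Oh(\sigma n)$ time---the nontrivial point being that every node has degree $\Oh(\sigma)$, so the dictionaries in the Cole--Hariharan construction can be plain lists. This is a key ingredient you are missing, and without it you have no $\Oh(1)$ test for $w[i..i+\ell-1]\op w[i+\ell..i+2\ell-1]$.

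Third, your own ``main obstacle'' paragraph is an honest admission that the proposal is incomplete: you do not have the data structure or the potential function that would make the amortized extension work, and you say ``this is where the real work lies.'' The paper sidesteps this entirely: there is no incremental window maintenance. Instead, all the needed order-isomorphism queries are reduced to fingerprint lookups on the op-suffix tree (Lemma~\ref{fingerprints}), and the lists of candidate fragments are precomputed offline by a single radix sort of $\Oh(\sigma n)$ tuples, with the total size controlled by Proposition~\ref{smallarrays}.

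Finally, a smaller error: your deduplication step says two occurrences ``represent the same word iff they have equal length and the same sequence of $\prev$-pointers normalized to start of the fragment.'' That is order-isomorphism, not equality as words; $123$ and $456$ have identical normalized $\prev$-sequences but are distinct words.
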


\subparagraph*{High-level description of our techniques.}
For the lower bound, first, we consider the increasing string $w=123\ldots n$ where $\sigma=n$. Clearly, any even-length fragment is
an order-preserving square thus producing the maximum number, i.e. $\Omega(n^2)=\Omega(\sigma n)$, of order-preserving squares
in a string of length $n$. To decrease the size of the alphabet $\sigma$, we replace $w$ with a non-decreasing string
$w=11\ldots122\ldots2\ldots\sigma\sigma\ldots\sigma$, where each character is repeated the same number of times.
We exhibit $\Omega(\sigma n)$ order-preserving squares in $w$ that are distinct as words.
See Section~\ref{lowerop} for more details.

For the upper bound, we build on the insight by Kociumaka et al.~\cite{Kociumaka}, where the high-level strategy is to consider each
suffix of $w$ separately. For each suffix, they considered the set of leftmost occurrences, consisting of the first occurrence
of each character of the alphabet. Thus, there are at most $\sigma$ leftmost occurrences in each suffix.
For a fixed suffix, they considered all of its prefixes as possible order-preserving squares $uv$. Next,
They showed that, because $u\neq v$, $uv$ is defined by a pair of leftmost occurrences such that one occurrence belongs to $u$,
and the other one belongs to $v$ at the same relative position. For example, let $acbadbxz$ be the suffix, then the pair of leftmost
occurrences $2$ and $5$ defines the order-preserving square $acbadb$.
Thus, as a result, they upper bounded the number of order-preserving squares being a prefix of the considered suffix by ${\sigma \choose 2}$, so ${\sigma \choose 2}n$ in total.

We also separately upper bound the number of order-preserving squares that are prefixes of a suffix of the input string $w$.
However, our goal is to show that there are only $\Oh(\sigma)$ such prefixes, so $\Oh(\sigma n)$ in total.
To this end, we first partition the order-preserving squares into groups. Let $O_{k}$ consists of all order-preserving squares
$uv$ such that $2^{k}\leq |uv| < 2^{k+1}$. Similarly, we partition the leftmost occurrences into groups. Let
$L_{k}$ consists of all leftmost occurrences $i$ such that $2^{k}\leq i < 2^{k+1}$.
Now, our strategy is to show that if $|O_{k}|$ is larger than some fixed constant then $|O_{k}| = \Oh(|L_{k-2}|)$.
The structure of the argument is as follows. We first observe that two order-preserving squares $uv$ and $u'v'$
imply that $|u|-\Delta$, where $\Delta=|u'|-|u|$, is a so-called order-preserving border of $u$.
We write $u=b_{1}b_{2}\ldots b_{f}b_{f+1}$, where $|b_{1}|=|b_{2}|=\ldots |b_{f}|=\Delta$ and $|b_{f+1}|<\Delta$,
and by carefully choosing $uv$ and $u'v'$ from $O_{k}$ conclude that $b_{2}$ contains a leftmost occurrence
and $f$ is proportional to $|O_{k}|$. Then, we argue that $b_{2}$ containing a leftmost occurrence implies
that, in fact, every $b_{j}$ contains a leftmost occurrence, and thus $|O_{k}| = \Oh(|L_{k-2}|)$.
Summing this over all $k$, and separately considering all $k$ such that $|O_{k}|$ is less than the fixed constant,
we are able to conclude that $\sum_{k}|O_k| = \sum_{k}\Oh(|L_k|)<\Oh(\sigma)$.
See Section~\ref{upperop} for more details.

To obtain an efficient algorithm for reporting all order-preserving squares, we apply the order-preserving suffix tree
as defined by Crochemore et al.~\cite{CROCHEMORE2016122}. This structure allows us to check if $w[i..i+2\ell-1]$ is
an order-preserving square by checking if the LCA of two leaves is at string depth at least $\ell$. First, we need
to show how to construct the order-preserving tree in $\Oh(\sigma n)$ time.
Second, we extend the above reasoning to efficiently generate only $\Oh(\sigma n)$ fragments that are then
tested for being an order-preserving square in constant time each.
While the underlying argument is essentially the same as when bounding the number of order-preserving squares,
it needs to be executed differently for the purpose of an efficient implementation.
See Section~\ref{algo} for more details.


\section{Preliminaries}
Let $\Sigma=\{1,\ldots,\sigma\}$ be a fixed finite alphabet of size $\sigma$.
Let $|s|$ denote the length of a string $s$.
For a string $s$, $s[i]$ denotes the character at position $i$ of $s$, and $s[i..j]$ is the fragment of $s$ starting at
position $i$ and ending at position $j$.
We call two strings $u$ and $v$ order-isomorphic, denoted by $u\op v$, when $|u|=|v|$ and, for each $i,j$, we have
$u[i] \leq u[j]$ if and only if $v[i] \leq v[j]$.
The concatenation of two strings $u$ and $v$ is denoted by $uv$.
A string of the form $uv$ is called an order-preserving square, or {\opsquare}, when $u\neq v$ and $u\op v$.
We call $u$ its left arm and $v$ its right arm.
We stress that a regular square, that is, a string of the form $xx$, is not an {\opsquare}.
Two {\opsquare}s $uv$ and $u'v'$ are \textit{distinct as words} if and only if $uv\neq u'v'$.

A trie is a rooted trie, with every edge labeled with a single character and edges outgoing from the same node
having distinct labels. A node $u$ of a trie represents the string obtained by reading the labels on the path from
the root to $u$. A compacted trie is obtained from a trie by replacing maximal paths consisting of nodes with
exactly one child with single edges labeled by the concatenation of the labels of the edges on the path.
A suffix tree $ST$ of a string $w$ is a compacted trie whose leaves correspond to the suffixes of $w\$$.
The string depth of a node $u$ of $ST$ is the length of the string that it corresponds to.
An explicit node of $ST$ is simply a node of $ST$.
An implicit node of $ST$ is a node of the non-compacted trie corresponding to $ST$, or in other words
a location on an edge of $ST$.

Next, we need some definitions specific to order-isomorphism.
Following Kubica et al.~\cite{KUBICA}, we call $b$ an op-border of a string $s[1..n]$ when $s[1..b] \op s[n-b+1..n]$.
Following Gourdel et al.~\cite{GOURDEL2020} (and Matsuoka et al.~\cite{MatsuokaAIBT16}),
we call $p$ an initial op-period of $s[1..n]$  when $s=b_{1}b_{2}\ldots b_{f }b_{f+1}$ with
$|b_{1}|=|b_{2}|=\ldots = |b_{f}|=p$ and $|b_{f+1}|<p$ (so $f=\lfloor n/p\rfloor$),
$b_{1}\op b_{2}\op \ldots \op b_{f}$ and $b_{1}[1..|b_{f+1}|]\op b_{2}[1..|b_{f+1}|] \op \ldots \op b_{f}[1..|b_{f+1}|] \op b_{f+1}$.
$b_{1},b_{2},\ldots,b_{f}$ are called the \textit{blocks} defined by $p$ in $s$, while $b_{f+1}$ (possibly empty) is called the \textit{incomplete block}.
While in the classical setting $p$ is a period of $s[1..n]$ if and only if $n-b$ is a border of $s[1..n]$,
in the order-preserving setting, we only have an implication in one direction.

\begin{proposition}
\label{borderperiods}
If $b$ is an op-border of $s[1..n]$ then $n-b$ is an initial op-period of $s[1..n]$.
\end{proposition}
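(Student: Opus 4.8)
The plan is to unpack the definitions of op-border and initial op-period and show that the former gives us exactly the combinatorial structure needed for the latter. Suppose $b$ is an op-border of $s[1..n]$, so that $s[1..b]\op s[n-b+1..n]$. Set $p=n-b$ and write $s=b_1b_2\ldots b_fb_{f+1}$ with $|b_1|=\ldots=|b_f|=p$ and $|b_{f+1}|<p$, so $f=\lfloor n/p\rfloor$. The goal is to verify the two conditions in the definition of an initial op-period: $b_1\op b_2\op\ldots\op b_f$, and $b_1[1..|b_{f+1}|]\op\ldots\op b_f[1..|b_{f+1}|]\op b_{f+1}$.

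First I would observe that the order-isomorphism $s[1..b]\op s[n-b+1..n]$ means precisely that for all positions $i,j$ in $\{1,\ldots,b\}$ we have $s[i]\le s[j]\iff s[p+i]\le s[p+j]$, since $b=n-p$ and the second fragment $s[n-b+1..n]=s[p+1..n]$ is just $s$ shifted by $p$. In other words, comparing $s$ against itself shifted by $p$ is consistent on the overlapping region of length $b$. The key step is then to propagate this one-step shift-consistency across multiple shifts. Concretely, I would argue that for any $t\ge 1$ with $p\cdot t$ still leaving a nonempty overlap, and for all valid $i,j$, we have $s[i]\le s[j]\iff s[pt+i]\le s[pt+j]$. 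This follows by a straightforward induction on $t$: the base case $t=1$ is the op-border hypothesis, and the inductive step chains $s[pt+i]\le s[pt+j]\iff s[p(t-1)+i]\le s[p(t-1)+j]$ (op-border applied to positions $\ge p$, which are still within the overlap) with the inductive hypothesis. One has to be a little careful that the positions stay in range, but since we only ever compare positions that are less than $p\cdot t$ apart and within $[1,n]$, this is fine.

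Having established this, the two required conditions fall out immediately. For $b_k\op b_{k+1}$: block $b_k$ occupies positions $(k-1)p+1,\ldots,kp$ and $b_{k+1}$ occupies $kp+1,\ldots,(k+1)p$, so comparing entry $i$ of $b_k$ with entry $j$ of $b_k$ amounts to comparing $s[(k-1)p+i]$ with $s[(k-1)p+j]$, and comparing the corresponding entries of $b_{k+1}$ amounts to comparing $s[kp+i]$ with $s[kp+j]$; these agree by the one-shift consistency. Transitively this gives $b_1\op b_2\op\ldots\op b_f$. The condition on the prefixes of length $|b_{f+1}|$ is the same argument restricted to the first $|b_{f+1}|$ positions of each block, noting that $b_{f+1}$ itself starts at position $fp+1$ and these positions $fp+1,\ldots,fp+|b_{f+1}|=n$ are all within $[1,n]$, so shift-consistency still applies.

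I do not expect a serious obstacle here; the statement is a clean generalization of the classical fact that a border induces a period, and the only thing to watch is that in the order-preserving world we genuinely only get the one-directional implication (an initial op-period need not conversely give an op-border), so the proof must not accidentally invoke the converse. The one slightly delicate point is bookkeeping the index ranges in the induction so that every comparison $s[\cdot]\le s[\cdot]$ we invoke really does lie inside the length-$b$ overlap where the op-border hypothesis is valid; I would state the induction claim with explicit bounds ($1\le i,j$ and $pt+\max(i,j)\le n$) to make this transparent.
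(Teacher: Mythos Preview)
Your proposal is correct and takes essentially the same approach as the paper: use the op-border $s[1..n-p]\op s[p+1..n]$ to conclude $b_k\op b_{k+1}$ for each $k$ (and likewise for the length-$|b_{f+1}|$ prefixes), then chain these by transitivity of $\op$. The paper does this by simply writing $s[1..n-p]=b_1\ldots b_{f-1}b_f[1..|b_{f+1}|]$ and $s[p+1..n]=b_2\ldots b_f b_{f+1}$ and reading off the aligned-substring consequences; your multi-shift induction $s[i]\le s[j]\iff s[pt+i]\le s[pt+j]$ is a harmless detour, since in the end you only invoke the $t=1$ case plus transitivity, which is exactly the paper's argument.
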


\begin{proof}
Let $p=n-b$ and $f=\lfloor n/p\rfloor$. We represent $s[1..n]$ as $s=b_{1}b_{2}\ldots b_{f}b_{f+1}$ with
$|b_{1}|=|b_{2}|=\ldots = |b_{f}|=p$ and $|b_{f+1}|<p$. By $b$ being an op-border of $s[1..n]$,
we have $s[1..b]\op s[n-b+1..n]$, so $s[1..n-p]\op s[p+1..n]$. We observe that
$s[1..n-p]=b_{1}b_{2}\ldots b_{f-1}b_{f}[1..|b_{f+1}|]$ and $s[p+1..n]=b_{2}b_{3}\ldots b_{f} b_{f+1}$.
Then, $b_{1}b_{2}\ldots b_{f-1}b_{f}[1..|b_{f+1}|] \op b_{2}b_{3}\ldots b_{f} b_{f+1}$
implies $b_{i} \op b_{i+1}$, for every $i=1,2,\ldots,f-1$, and $b_{i}[1..|b_{f+1}|]\op b_{i+1}[1..|b_{f+1}|]$,
for every $i=1,2,\ldots,f$.
Hence, $b_{1}\op b_{2}\op b_{3} \op \ldots \op b_{f-1} \op b_{f}$ and $b_{1}[1..|b_{f+1}|]\op b_{2}[1..|b_{f+1}|] \op \ldots \op b_{f}[1..|b_{f+1}|] \op b_{f+1}$, so $p=n-b$ is indeed an initial op-period of $s[1..n]$.
\end{proof}
Due to Proposition~\ref{borderperiods}, $b$ being an op-border of $s[1..n]$ implies
that $s[1..n]=b_{1}b_{2}\ldots b_{f} b_{f+1}$, where $b_{1}b_{2}\ldots b_{f}[1..|b_{f+1}|]\op b_{2}b_{3}\ldots b_{f} b_{f+1}$,
$|b_{1}|=|b_{2}|=\ldots = |b_{f}|=n-b$ and $|b_{f+1}|<p$ (so $f=\lfloor n/(n-b)\rfloor$),
$b_{1}\op b_{2}\op \ldots \op b_{f}$ and $b_{1}[1..|b_{f+1}|]\op b_{2}[1..|b_{f+1}|] \op \ldots \op b_{f}[1..|b_{f+1}|] \op b_{f+1}$.
We will say that these blocks are defined by $b$.


\section{Lower Bound}
\label{lowerop}
Recall that $\Sigma=\{1,\ldots,\sigma\}$.
We define a string $w=11\ldots122\ldots2\ldots\sigma\sigma\ldots\sigma$,
that is, a concatenation of $\sigma$ blocks, each consisting of $k$ repetitions of the same character.
We note that $|w|=\sigma k$.
For $i=1,2,\ldots,\lfloor\sigma/2\rfloor $, we consider all fragments of $w$ of length $2ik$
starting at positions $j=1,2,\ldots,|s|-2ik+1$.
For $j=1 \bmod k$, the fragment is a concatenation of $2i$ blocks, each block consisting of $k$ repetitions of the same character.
For $j\neq 1 \bmod k$, the fragment starts with $r\in [1,k-1]$ repetitions of the same character,
then $2i-1$ blocks, each block consisting of $k$ repetitions of the same character,
and finally $\ell=k-r$ repetitions of the same character. See Figure~\ref{fig:ex1}.

\begin{figure}[ht]
\centerline{\includegraphics[scale=.5]{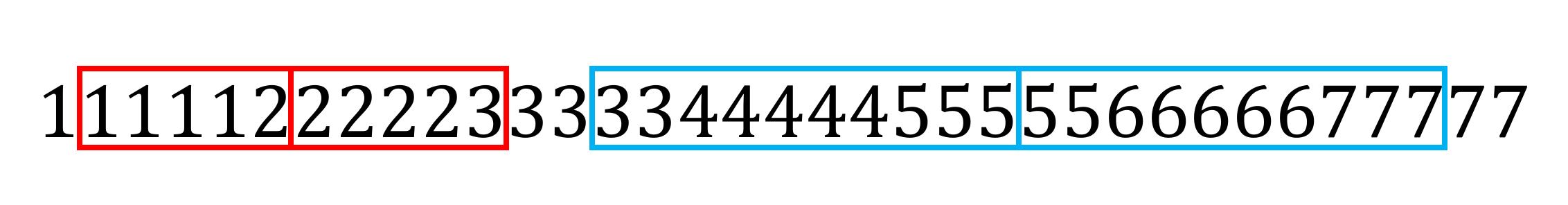}}
\caption{The red box corresponds to an {\opsquare} of length $10$ containing $3$ different characters. The blue box corresponds to an {\opsquare} of length $20$ containing $5$ different characters.}
\label{fig:ex1}
\end{figure}

\noindent Each such fragment is an {\opsquare}. For $j=1 \bmod k$, both the left and the right arm consist of $i$ blocks
consisting of $k$ repetitions of character $c,c+1,\ldots,c+i-1$.
For $j\neq 1 \bmod k$, both the left and the right arm consist of first $r$ repetitions of character $c$,
then $i-1$ blocks consisting of $k$ repetitions of characters $c+1,c+2,\ldots,c+i-2$, and then finally $\ell$ repetitions of character $c+i-1$.
Thus, the left and the right arm are always order-isomorphic. Further, for every choice of $i$ and the starting position we obtain a different word,
as two such fragments of the same length either start with different characters or differ in the length of the first block of the same
character.

Now, we analyze the number of such {\opsquare}s in $w$. By considering every $1\leq i\leq \lfloor \sigma/2\rfloor$ and starting
position $1,2,\ldots,|s|-2ik+1$, we obtain that the number of {\opsquare}s in $w$ is at least:

\begin{align*}
\sum_{i=1}^{\lfloor \sigma/2\rfloor} (|s|-2ik+1) &= \sum_{i=1}^{\lfloor\sigma/2\rfloor} (\sigma k-2ik+1)  = \lfloor\sigma/2\rfloor \cdot (\sigma k - k (\lfloor \sigma/2\rfloor + 1) +1) \\
& \geq  \lfloor \sigma/2 \rfloor \cdot (k(\lceil \sigma/2\rceil -1) + 1) .
\end{align*}
For $\sigma \geq 3$, this is at least $\sigma^{2}k/12=\sigma n/12$ for any $k$.
For $\sigma=1,2$, we additionally assume $k\geq 2$ and count {\opsquare}s of the form $1^{2i}$,
there are $\lfloor k/2\rfloor \geq n/6 \geq \sigma n / 12$ of them.
Thus, in either case for every $k\geq 2$ we obtain a string of length $n=k\sigma$ over $\Sigma$
containing $\sigma n/12$ {\opsquare}s that are distinct as words.

\begin{theorem}
\label{thm:oplower}
For any alphabet $\Sigma=\{1,2,\ldots,\sigma\}$, there exists an infinite family of strings of length $n=k\sigma$ over $\Sigma$
containing $\Omega(\sigma n)$ {\opsquare}s distinct as words.
\end{theorem}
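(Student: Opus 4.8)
The plan is to verify that the string $w=1^{k}2^{k}\cdots\sigma^{k}$ of length $n=\sigma k$ and the family of fragments introduced above have the claimed properties. Since the construction is already in hand, the proof reduces to four checks: (a) every fragment in the family is an {\opsquare}; (b) distinct fragments yield distinct words; (c) once $\sigma\ge3$ the family has $\Omega(\sigma n)$ members; and (d) a simpler construction covers the small alphabets $\sigma\le2$.

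For (a), I would fix $1\le i\le\lfloor\sigma/2\rfloor$ and a starting position $j$, set $x=w[j..j+2ik-1]$, and split $x=uv$ with $|u|=|v|=ik$. If $j\equiv1\pmod k$ then, for the appropriate first character $c$, we get $u=c^{k}(c{+}1)^{k}\cdots(c{+}i{-}1)^{k}$ and $v=(c{+}i)^{k}\cdots(c{+}2i{-}1)^{k}$; otherwise, writing $r\in[1,k{-}1]$ for the length of the first run and $\ell=k-r$, a short computation of which blocks of $w$ the arms cover gives $u=c^{r}(c{+}1)^{k}\cdots(c{+}i{-}1)^{k}(c{+}i)^{\ell}$ and $v=(c{+}i)^{r}(c{+}i{+}1)^{k}\cdots(c{+}2i{-}1)^{k}(c{+}2i)^{\ell}$, where the constraint $j\le|w|-2ik+1$ ensures $c+2i\le\sigma$ so that all these characters exist. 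In both cases $u$ and $v$ are built from the same sequence of run lengths over strictly increasing, pairwise distinct characters, so $u\op v$; and since $i\ge1$ the first character of $u$ is strictly smaller than that of $v$, whence $u\neq v$ and $x$ is an {\opsquare}.

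For (b), fragments coming from different values of $i$ have different lengths and are trivially distinct; for two fragments of a common length $2ik$ starting at positions $j_{1}\neq j_{2}$, either $j_{1}\equiv j_{2}\pmod k$, so they begin in different blocks of $w$ and hence with different characters, or $j_{1}\not\equiv j_{2}\pmod k$, so their first runs, of lengths $k-((j-1)\bmod k)$, differ — in either case the words differ, so the map from fragments to words is injective. For (c), for each $i$ there are $|w|-2ik+1=\sigma k-2ik+1$ fragments, and summing over $i=1,\dots,\lfloor\sigma/2\rfloor$ gives, by the computation displayed just before the theorem, $\lfloor\sigma/2\rfloor\bigl(k(\lceil\sigma/2\rceil-1)+1\bigr)$, which is $\Omega(\sigma^{2}k)=\Omega(\sigma n)$ for $\sigma\ge3$. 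For (d), when $\sigma=2$ I would take $w=1^{k}2^{k}$ with the $k$ fragments $1^{a}2^{a}$, $1\le a\le k$ — each an {\opsquare} with constant arms of equal length over two distinct characters, all distinct as words — which gives $k=\Omega(\sigma n)$; the case $\sigma=1$ is degenerate, so throughout we may assume $\sigma\ge2$.

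All four steps are essentially bookkeeping. The one point that really needs attention is the condition $u\neq v$ in step (a): this is exactly what separates {\opsquare}s from ordinary squares, and it is why the unary and binary alphabets must be treated by hand (over a unary alphabet every candidate splits as $u=v$). A secondary subtlety is ensuring the count in (c) is not an overcount — but since a word of length $2ik$ in the family determines $i$, its first character, and the length of its first run, and these in turn pin down the starting position, the injectivity from (b) makes the sum a genuine lower bound on the number of {\opsquare}s that are distinct as words.
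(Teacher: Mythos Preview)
Your proposal is correct and follows essentially the same approach as the paper: the same string $w=1^{k}2^{k}\cdots\sigma^{k}$, the same family of length-$2ik$ fragments, the same distinctness argument, and the same summation. The only notable difference is your treatment of $\sigma\le 2$, where you exhibit the {\opsquare}s $1^{a}2^{a}$ rather than the paper's $1^{2i}$; your choice is in fact the one consistent with the requirement $u\ne v$ in the definition of an {\opsquare}.
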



\section{Upper bound}
\label{upperop}
Our goal in this section is to upper bound the number of {\opsquare}s in a given string $w$ of length $n$ over
the alphabet $\Sigma=\{1,\ldots,\sigma\}$.
Recall that $uv$ is an {\opsquare} when $u\neq v$ and $u\op v$.
We will show that this number is $\Oh(\sigma n)$.
As explained in the introduction, by the result of Brlek and Li~\cite{Brlek}, the number of regular squares, that is,
fragments of the form $uu$ that are distinct as words, is less than $n$.
Thus, our result in fact allows us to upper bound the number of fragments of the form $uv$, where $u\op v$,
that are distinct as words by $\Oh(\sigma n)$ as well.

We consider each suffix of $w$ separately. For each suffix $w[i..n]$, we will upper bound the number of prefixes
of $w[i..n]$ that are {\opsquare}s by $\Oh(\sigma)$. Therefore, to avoid cumbersome notation in the remaining part of this section we will
assume that we have a string $s$ of length $m$ over the alphabet $\Sigma=\{1,\ldots,\sigma\}$, and we want
to upper bound the number of {\opsquare}s $uv$ that are prefixes of $s$ by $\Oh(\sigma)$.
See Figure~\ref{fig:squares}.

\begin{figure}[ht]
\centerline{\includegraphics[scale=.3]{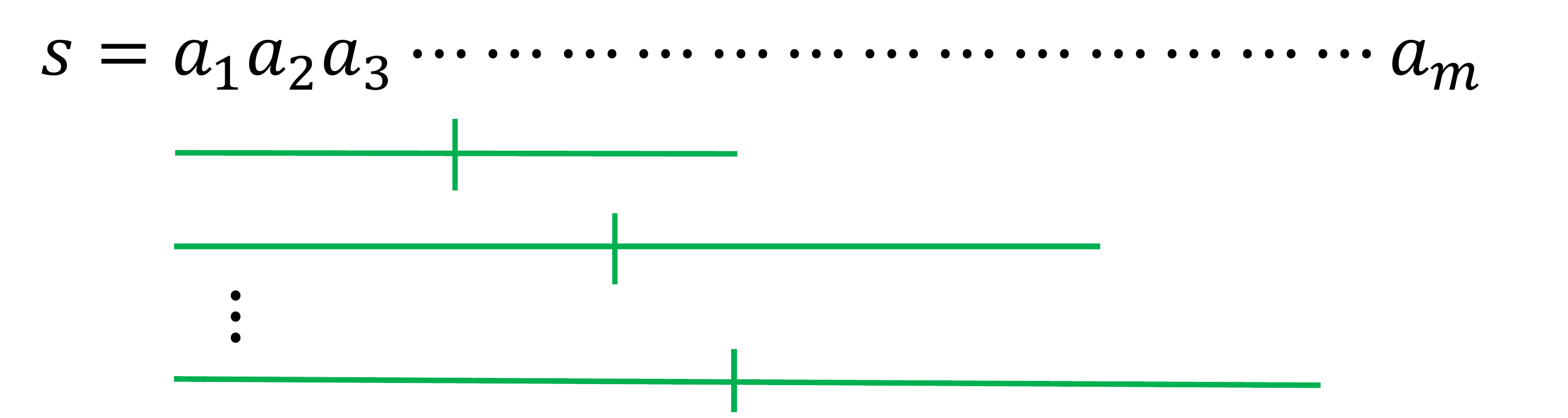}}
\caption{Green prefixes of $s$ are {\opsquare}s.}
\label{fig:squares}
\end{figure}

Kociumaka et al.~\cite{Kociumaka} observed that every {\opsquare} $uv$ that is a prefix of $s$ can be obtained
as follows (recall that in our definition $u\neq v$). We call $i$ a leftmost occurrence when $s[j]\neq s[i]$ for every $i<j$.
Then, there exists $i$ and $j$ such that both $i$ and $j$ are leftmost occurrences, $i$ belongs to $u$ and $j$
belongs to $v$, and further $|u|=j-i$. More formally:

\begin{proposition}[{\cite[Lemma 4.2 and Corollary 4.3]{Kociumaka}}]
\label{kociumaka}
We can construct an injective function $g$ mapping {\opsquare}s that are prefixes of $s$ to 2-element
subsets of the alphabet as follows. We choose the smallest $i$ belonging to $v$
such that $s[i]=a$ does not occur in $u$, and let $s[i-|u|]=b$ be its counterpart in $u$,
then set $g(uv)=\{a,b\}$.
Both $i$ and $i-|u|$ are leftmost occurrences.
\end{proposition}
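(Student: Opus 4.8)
The plan is to verify the assertions of the proposition in order and then prove that $g$ is injective. Fix an {\opsquare} $uv$ that is a prefix of $s$ and write $\ell=|u|$, so $u=s[1..\ell]$, $v=s[\ell+1..2\ell]$, and $u\op v$ unfolds to $s[p]\le s[q]\iff s[\ell+p]\le s[\ell+q]$ for all $p,q\in\{1,\dots,\ell\}$. First I would check that the index $i$ in the statement exists. Let $A$ and $B$ be the sets of characters occurring in $u$ and in $v$. An order-isomorphism preserves which pairs of positions carry equal characters, hence preserves the number of distinct characters, so $|A|=|B|$; moreover, if $A=B$ then the common order structure forces $s[p]=s[\ell+p]$ for every $p$ (in each arm the character at position $p$ is the $r$-th smallest element of the common set for a common $r$), i.e.\ $u=v$, contradicting $u\neq v$. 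Hence $A\neq B$, and with $|A|=|B|$ this gives $B\setminus A\neq\emptyset$: some character of $v$ does not occur in $u$, so the smallest position $i\in\{\ell+1,\dots,2\ell\}$ with $s[i]\notin A$ is well defined.

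Next I would establish the two leftmost-occurrence claims and that $a\neq b$, where $a=s[i]$ and $b=s[i-\ell]$. The character $a$ is absent from $u=s[1..\ell]$ by the choice of $i$ and from $s[\ell+1..i-1]$ by the minimality of $i$, so $s[j]\neq a$ for all $j<i$ and $i$ is a leftmost occurrence. For $b$, suppose $s[p]=b=s[i-\ell]$ for some $p<i-\ell$; then $p,\,i-\ell\in\{1,\dots,\ell\}$ and $s[p]=s[i-\ell]$, so $u\op v$ yields $s[\ell+p]=s[i]=a$, with $\ell+1\le\ell+p<i$ a position of $v$, contradicting the minimality of $i$. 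Hence $i-\ell$ is a leftmost occurrence. Finally $b$ occurs in $u$ (at position $i-\ell$) while $a$ does not, so $a\neq b$ and $g(uv)=\{a,b\}$ is a genuine $2$-element subset of the alphabet.

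For injectivity I would use that in the fixed string $s$ every character that occurs has a unique leftmost occurrence, namely its first occurrence. By the previous paragraph the unordered pair $\{\,i,\,i-\ell\,\}$ is exactly the pair of first-occurrence positions of $a$ and of $b$, and is therefore determined by $g(uv)=\{a,b\}$; since $\ell\ge 1$, within this pair $i$ is the larger element, so $g(uv)$ determines $i$ and $i-\ell$, hence $\ell=|u|$, hence the whole prefix $uv=s[1..2\ell]$. Thus distinct {\opsquare}s that are prefixes of $s$ get distinct images, $g$ is injective, and in particular there are at most $\binom{\sigma}{2}$ of them. The one step I expect to need genuine care is the leftmost-occurrence claim for $i-\ell$: it is the only place where one transports an equality from $u$ to $v$ through $u\op v$ and then appeals to the minimality of $i$; everything else is bookkeeping with the definitions of order-isomorphism and of an {\opsquare}.
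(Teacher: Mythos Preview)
Your proof is correct. Note, however, that the paper does not actually prove this proposition: it is stated with a citation to Kociumaka, Radoszewski, Rytter, and Wale\'n and used as a black box, so there is no ``paper's own proof'' to compare against here.

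Your argument is a clean self-contained reconstruction of the cited result. The existence of $i$ via $|A|=|B|$ and $A\neq B$ is right; the one-line justification that $A=B$ forces $u=v$ is correct but compressed---the point is that $\op$ preserves the equality partition on positions, so the rank of $u[p]$ among the distinct values of $u$ equals the rank of $v[p]$ among the distinct values of $v$, and when the value sets coincide this pins down each character. The leftmost-occurrence claim for $i-\ell$ is exactly the delicate step, and you handle it correctly by pushing an assumed earlier occurrence of $b$ through the isomorphism to get an earlier occurrence of $a$ in $v$, contradicting minimality. The injectivity argument---recover the two first-occurrence positions from the unordered pair $\{a,b\}$, take the larger to be $i$, and read off $\ell$---is the standard one and matches what the cited reference does.
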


We split all {\opsquare}s that are prefixes of $s$ into groups. Let $O_k$ denote the group of {\opsquare}s that
are prefixes of $s$ having length at least $2^k$ and at most $2^{k+1}-1$:
\begin{definition}
$ O_k=\{uv\,|\, u\neq v \text{ and } u\op v \text{ and } 2^k \leq |uv| < 2^{k+1}\} \text{ for } 0\leq k\leq\log m. $ 
\end{definition}
\noindent In other words, we split $s$ into consecutive ranges of exponentially increasing lengths,
such that the $k$-th range is of length $2^{k-1}$, starts at position $2^k$ and ends at position $2^{k+1}-1$ in $s$
(where $0\leq k\leq\log m$ and the final range may not be complete when $m < 2^{k+1}-1$).
Then, $O_{k}$ consists of {\opsquare}s that end in the $k$-th range. See Figure~\ref{fig:groups}.
\begin{figure}[ht]
\centerline{\includegraphics[scale=.3]{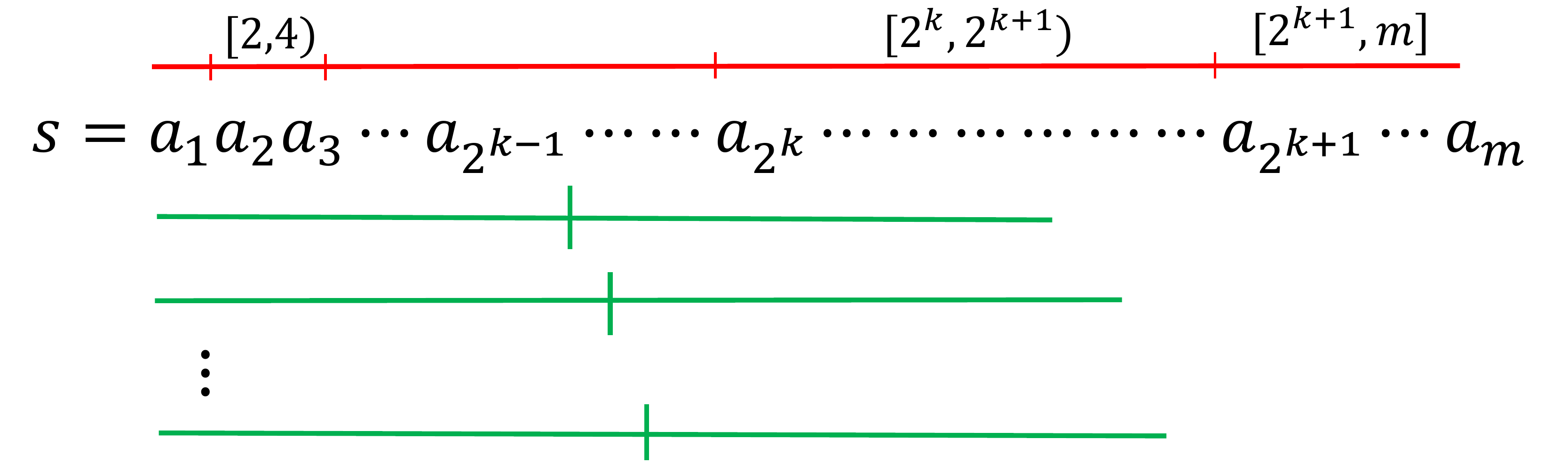}}
\caption{Green prefixes of $s$ are {\opsquare}s ending in the $k$-th range. The red line illustrates the ranges.}
\label{fig:groups}
\end{figure}

The number of {\opsquare}s $uv$ that are prefixes of $s$ is $\sum_{k=0}^{\log m} |O_k|$.
In order to upper bound the sum, we will separately upper bound the size of each group.
We first need some propositions.

\begin{proposition}
\label{prefixsuffix}
For any $\{uv,u'v'\}\in O_k$ such that $|u|<|u'|$ and $\Delta=|u'|-|u|$, $|u|-\Delta$ is an op-border of
both $u$ and $v$.
\end{proposition}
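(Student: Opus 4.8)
The plan is to exploit the fact that both $uv$ and $u'v'$ are prefixes of the same string $s$, together with the two order-isomorphisms $u\op v$ and $u'\op v'$, to locate a long op-border inside $u$ (and symmetrically inside $v$). Write $\ell=|u|$ and $\ell'=|u'|=\ell+\Delta$. Since $uv$ and $u'v'$ are prefixes of $s$ we have $u=s[1..\ell]$, $v=s[\ell+1..2\ell]$, $u'=s[1..\ell']$, $v'=s[\ell'+1..2\ell']$. The target claim is that $s[1..\ell-\Delta]\op s[\Delta+1..\ell]$, i.e.\ that the length-$(\ell-\Delta)$ prefix of $u$ is order-isomorphic to the length-$(\ell-\Delta)$ suffix of $u$; and likewise $v[1..\ell-\Delta]\op v[\Delta+1..\ell]$.

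First I would establish the $u$ part. The key observation is that $u'=s[1..\ell']$ and $v'=s[\ell'+1..2\ell']$ are order-isomorphic as a whole, and $u=s[1..\ell]$ is a prefix of $u'$ while $s[\ell'+1..\ell'+\ell]=v'[1..\ell]$ is the length-$\ell$ prefix of $v'$; order-isomorphism is inherited by taking the same positions on both sides, so $u=s[1..\ell]\op s[\ell'+1..\ell'+\ell]$. On the other hand $v=s[\ell+1..2\ell]$ satisfies $u\op v$, hence $s[\ell'+1..\ell'+\ell]\op s[\ell+1..2\ell]$ by transitivity. Now $s[\ell'+1..\ell'+\ell]$ and $s[\ell+1..2\ell]$ are two fragments of $s$ whose starting positions differ by exactly $\ell'-\ell=\Delta$, and their overlap inside $s$ is the interval $s[\ell'+1..\ell+\ell']\cap s[\ell+1..2\ell]$ — wait, more carefully: the second fragment occupies $[\ell+1,2\ell]$ and the first occupies $[\ell'+1,\ell'+\ell]=[\ell+\Delta+1,2\ell+\Delta]$, so they are shifts of each other by $\Delta$. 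Restricting the isomorphism $s[\ell+1..2\ell]\op s[\ell+\Delta+1..2\ell+\Delta]$ to the positions that make both sides land inside $[\ell+1,2\ell]$ gives $s[\ell+1..2\ell-\Delta]\op s[\ell+\Delta+1..2\ell]$, i.e.\ $v[1..\ell-\Delta]\op v[\Delta+1..\ell]$, so $\ell-\Delta$ is an op-border of $v$. An entirely symmetric argument starting from the pair of prefixes of $u'$ of length $\ell$ gives the same for $u$; alternatively, since $u\op v$ and op-borderhood is preserved under $\op$, once we know $\ell-\Delta$ is an op-border of $v$ we immediately get that it is an op-border of $u$ as well.

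The main obstacle — really the only delicate point — is the bookkeeping of indices when restricting an order-isomorphism between two shifted fragments of the same string to their common positions, i.e.\ verifying that if $s[a..b]\op s[a+\Delta..b+\Delta]$ then $s[a..b-\Delta]\op s[a+\Delta..b]$; this is immediate because order-isomorphism of two equal-length strings is inherited by any common sub-selection of positions, here the first $b-a+1-\Delta$ positions on each side. I would also note at the outset that $|u|<|u'|$ forces $\Delta<\ell$ is not automatic, so I would remark that if $\Delta\ge\ell$ the statement is vacuous (the claimed prefix/suffix has nonpositive length), and otherwise the argument above applies; in fact since $uv,u'v'\in O_k$ both lengths lie in $[2^k,2^{k+1})$, so $\Delta=|u'v'|-|uv| < 2^k \le |uv| = 2\ell$, giving $\Delta<\ell$ except possibly in the borderline $\Delta=\ell$ case which is again vacuous. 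Tracking these index ranges correctly is the entire content of the proof; everything else is transitivity of $\op$.
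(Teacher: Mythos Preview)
Your proof is correct and follows essentially the same route as the paper's one-line chain
\[
u[1..|u|-\Delta]=u'[1..|u|-\Delta]\op v'[1..|u|-\Delta]=v[\Delta+1..|u|]\op u[\Delta+1..|u|],
\]
just targeting $v$ first instead of $u$ and spelling out the index bookkeeping. One minor slip in your closing paragraph: you write $\Delta=|u'v'|-|uv|$, but in fact $|u'v'|-|uv|=2\Delta$; with the correct formula you get $2\Delta<2^{k}$ and hence $\Delta<2^{k-1}\le |u|$ outright, so the borderline case you worry about never occurs.
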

\begin{proof}
Because $u\op v$ it is enough to show that $\Delta$ is an op-border of $u$.
By the assumption that both $uv$ and $u'v'$ are {\opsquare}s we have:
\[ u[1..|u|-\Delta]=u'[1..|u|-\Delta]\op v'[1..|u|-\Delta]=v[\Delta+1..|u|]\op u[\Delta+1..|u|]  .\]
See Figure~\ref{fig:border}.
\end{proof}

\begin{figure}[ht]
\centerline{\includegraphics[scale=.3]{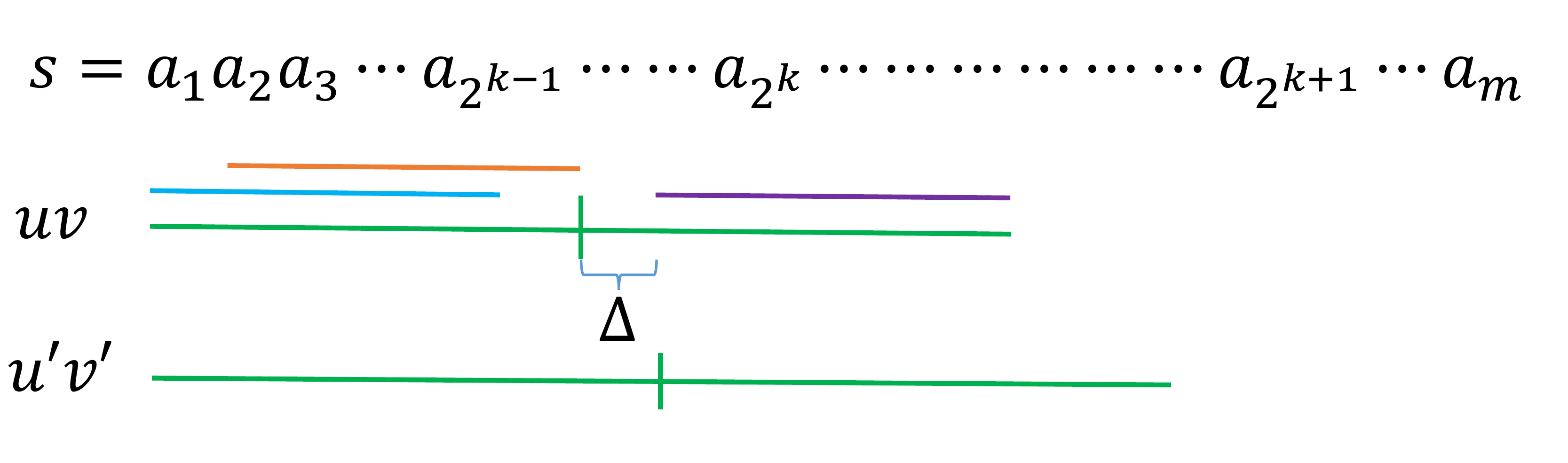}}
\caption{The green lines correspond to $uv$ and $u'v'$. The blue line corresponds to $u[1..|u|-\Delta]$. The orange line corresponds to $u[\Delta+1..|u|]$. The purple line corresponds to $v[\Delta+1..|u|]$.}
\label{fig:border}
\end{figure} 

In the remaining part of this section, we will often consider $\{uv,u'v'\}\in O_k$ such that $|u|<|u'|$ and $\Delta=|u'|-|u|$.
Then, by Proposition~\ref{prefixsuffix}, $\Delta$ is an op-border of $u$, and thus by Proposition~\ref{borderperiods}
$u$ can be represented as a concatenation of $f=\lfloor |u|/\Delta\rfloor$ blocks $b_{1},b_{2},\ldots,b_{f}$ and
one incomplete block $b_{f+1}$, where $|b_{1}|=|b_{2}|=\ldots=|b_{f}|=\Delta$ and $|b_{f+1}|<\Delta$,
such that $b_{1}b_{2}\ldots b_{f}[1..|b_{f+1}|]\op b_{2}b_{3}\ldots b_{f} b_{f+1}$, 
$b_{1}\op b_{2}\op\ldots\op b_{f}$ and
$b_{1}[1..|b_{f+1}|]\op b_{2}[1..|b_{f+1}|]\op\ldots\op b_{f}[1..|b_{f+1}|]\op b_{f+1}$. See Figure~\ref{fig:blocks}.
For brevity, in the remaining part of the paper we will describe this situation by saying that $\{uv,u'v'\}\in O_k$ define
blocks $b_{1},b_{2},\ldots,b_{f},b_{f+1}$.

\begin{figure}[ht]
\centerline{\includegraphics[scale=.3]{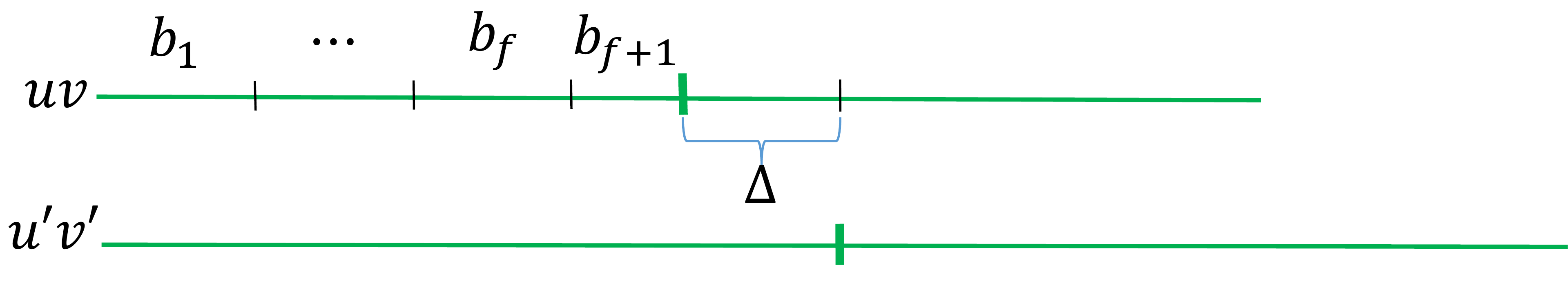}}
\caption{Blocks defined by $\{uv,u'v'\}$ in $u$.}
\label{fig:blocks}
\end{figure} 

\begin{proposition}
\label{delta}
If $|O_{k}|\geq 3$ then there exist $\{uv,u'v',u''v''\}\in O_k$ such that $0 < |u'|-|u|, |u''|-|u'| < 2^{k} / (|O_k|-2)$.
\end{proposition}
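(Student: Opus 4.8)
The plan is to reduce the statement to a pigeonhole argument on the lengths of the left arms. First I would observe that an {\opsquare} $uv$ that is a prefix of $s$ is completely determined by its length $|uv|=2|u|$; in particular, the map $uv\mapsto|u|$ is injective on $O_k$. Moreover, $2^k\le|uv|<2^{k+1}$ forces $2^{k-1}\le|u|<2^k$, so the $|O_k|$ values of $|u|$ arising from $O_k$ are distinct integers in $\{2^{k-1},2^{k-1}+1,\dots,2^k-1\}$, an interval containing exactly $2^{k-1}$ integers. Write $t=|O_k|\ge 3$ and list these values in increasing order as $n_1<n_2<\dots<n_t$, with corresponding {\opsquare}s $u_1v_1,\dots,u_tv_t$ (so $|u_\ell|=n_\ell$). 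Since the $n_\ell$ lie in that interval, $n_t-n_1\le 2^{k-1}-1$.

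Next I would consider the $t-1$ gaps $g_\ell=n_{\ell+1}-n_\ell\ge 1$ and call a gap \emph{short} if $g_\ell<2^k/(t-2)$ and \emph{long} otherwise; this is well defined since $t-2\ge 1$. The goal is to find an index $a\in\{1,\dots,t-2\}$ with both $g_a$ and $g_{a+1}$ short: then $uv:=u_av_a$, $u'v':=u_{a+1}v_{a+1}$, $u''v'':=u_{a+2}v_{a+2}$ are three elements of $O_k$ with $0<|u'|-|u|=g_a<2^k/(t-2)$ and $0<|u''|-|u'|=g_{a+1}<2^k/(t-2)$, which is exactly the asserted conclusion.

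To produce such an index, I would argue by contradiction. If no two consecutive gaps are short, then the set of indices of short gaps is an independent set in the path on the vertex set $\{1,\dots,t-1\}$, hence has at most $\lceil (t-1)/2\rceil$ elements, so there are at least $\lfloor(t-1)/2\rfloor\ge(t-2)/2$ long gaps, each of size at least $2^k/(t-2)$. Therefore
\[ n_t-n_1 \;=\; \sum_{\ell=1}^{t-1} g_\ell \;\ge\; \frac{t-2}{2}\cdot\frac{2^k}{t-2} \;=\; 2^{k-1}, \]
contradicting $n_t-n_1\le 2^{k-1}-1<2^{k-1}$. Hence the desired index $a$ exists, and the proposition follows.

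I do not expect a genuine obstacle here. The only points requiring care are verifying that the left-arm lengths really do occupy an interval of exactly $2^{k-1}$ integers (this uses that $|uv|$ is even together with the range defining $O_k$) and the bookkeeping with the floor and with the strict-versus-non-strict inequalities in the final contradiction; neither of these is subtle.
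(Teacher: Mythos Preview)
Your proof is correct and follows essentially the same pigeonhole idea as the paper: list the left-arm lengths in increasing order inside the interval $[2^{k-1},2^{k})$ and find a place where two consecutive gaps are both small. The only cosmetic difference is that the paper bounds the double gaps $|u_{2i+1}|-|u_{2i-1}|$ directly and then notes each half is smaller, whereas you count long gaps via an independent-set argument; both routes yield the same contradiction against $n_t-n_1\le 2^{k-1}-1$.
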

\begin{proof}
The length of every {\opsquare} in $O_{k}$ belongs to $[2^{k},2^{k+1})$, thus the length of its left arm falls within
$[2^{k-1},2^{k})$. Let $O_{k}={u_{1}v_{1},u_{2},v_{2},\ldots,u_{\ell}v_{\ell}}$ with $|u_{1}|<|u_{2}|<\ldots<|u_{\ell}|$.
Then, for some $i\in\{1,2,\ldots,\lfloor (\ell-1)/2 \rfloor\}$ we must have $|u_{2i+1}| < |u_{2i-1}| + 2^{k-1}/ \lfloor (\ell-1)/2 \rfloor$
(as otherwise we would have $u_{\ell} \geq u_{1}+2^{k-1}$).
The sought {\opsquare}s are $u_{2i-1}v_{2i-1}, u_{2i}v_{2i}, u_{2i+1}v_{2i+1}$ because:
\begin{align*}
|u_{2i}|-|u_{2i-1}|, |u_{2i+1}|-|u_{2i}| &< |u_{2i+1}|-|u_{2i-1}| < 2^{k-1}/ \lfloor (\ell-1)/2 \rfloor  \\
& \leq  2^{k-1} / (\ell/2 - 1) = 2^{k} / (|O_{k}|-2).
\end{align*}
This finishes the proof.
\end{proof}

With all the propositions in hand, we are now ready for the technical lemmas. Our goal is to upper bound $\sum_{k}|O_{k}|$
by the number of leftmost occurrences. To this end, we need to show that, if some $O_{k}$ is large then there are many
leftmost occurrences in some range. This will be done by applying the following reasoning to the three {\opsquare}s chosen
by applying Proposition~\ref{delta}.

\begin{lemma}
\label{leftposition}
If $|O_k|\geq 3$ then for any $\{uv,u'v',u''v''\}\in O_k$ where $|u|<|u'|<|u''|$ such that $\{uv,u'v'\}$ defines $b_1,\ldots,b_f,b_{f+1}$
and $\{u'v',u''v''\}$ defines $b'_1,\ldots,b'_f,b'_{f+1}$ there is a leftmost occurrence in $b_j$ such that $j\neq 1$ or
there is a leftmost occurrence in $b'_{j'}$ such that $j'\neq 1$.
\end{lemma}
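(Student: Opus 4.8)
The plan is to argue by contradiction: assume that every $b_j$ with $j\neq 1$ and every $b'_{j'}$ with $j'\neq 1$ is free of leftmost occurrences, and derive that one of $uv$, $u'v'$, $u''v''$ fails to be an {\opsquare}, or that two of them coincide. The intuition is that a block $b_j$ containing no leftmost occurrence must be a ``copy'' of an earlier block in the sense that all characters appearing in $b_j$ already appeared to its left; combined with the order-isomorphisms $b_1\op b_2\op\cdots\op b_f$ guaranteed by Propositions~\ref{borderperiods} and~\ref{prefixsuffix}, this should force $b_2,b_3,\ldots$ to be \emph{equal} as strings to $b_1$, not merely order-isomorphic, and similarly for the primed blocks. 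Once the blocks are literally equal, $u$ (and hence $v$, since $u\op v$ and in fact $u=v$ would follow) becomes periodic with period $\Delta=|u'|-|u|$ in the strong sense, and then $u'v'$ should collapse to a regular square $u'u'$, contradicting the definition of an {\opsquare}.

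Here is the order of steps I would carry out. First, recall from Proposition~\ref{prefixsuffix} and Proposition~\ref{borderperiods} that $\{uv,u'v'\}\in O_k$ defines blocks with $b_1\op b_2\op\cdots\op b_f$, where $|b_i|=\Delta$, and that the overlap relation $b_1 b_2\cdots b_f[1..|b_{f+1}|]\op b_2 b_3\cdots b_f b_{f+1}$ holds; the analogous statement holds for $b'_1,\ldots,b'_f,b'_{f+1}$ with $\Delta'=|u''|-|u'|$. Second, I would prove the key claim: \emph{if $b_j$ contains no leftmost occurrence for all $j\ge 2$, then $b_1=b_2=\cdots=b_f$ as strings, and $b_{f+1}=b_1[1..|b_{f+1}|]$.} The argument is by induction on $j$: given $b_1=\cdots=b_{j-1}$, the block $b_j$ occurs at a position all of whose characters have appeared strictly earlier (no leftmost occurrence in $b_j$), each such character's previous appearance lying inside the prefix $b_1\cdots b_{j-1}$ — here one must be careful that a character of $b_j$ could have its leftmost occurrence \emph{inside} $b_1$, which is allowed. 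Using that $b_{j-1}\op b_j$ and that $b_{j-1}=b_1$, together with the fact that the relative order of characters within the window $b_1\cdots b_j$ is pinned down by order-isomorphism, I would conclude $b_j=b_{j-1}$. The cleanest way is: order-isomorphic strings over the same multiset of distinct values, positioned so that the values are forced to coincide, must be equal. Third, apply the claim to both the unprimed and the primed decompositions (using the hypothesis, \emph{at least one} of the two ``no leftmost occurrence'' assumptions holds for all relevant blocks after negation — actually the negation gives us that \emph{both} hold). Fourth, conclude that $u$ has period exactly $\Delta$ as a string, i.e. $u[t]=u[t+\Delta]$ for all valid $t$; then $u'=u[1..|u|]\,u[1..\Delta]$ extended appropriately, so $u'$ is also periodic with period $\Delta$, and then $v'=u'$ by tracking the same periodicity through $u'v'$, contradicting $u'\neq v'$. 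The primed blocks give the parallel contradiction if it is the $b'$ side that is block-constant.

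The step I expect to be the main obstacle is the second one — showing that ``no leftmost occurrence in $b_j$'' upgrades $b_{j-1}\op b_j$ to $b_{j-1}=b_j$. Order-isomorphism alone never forces equality; one genuinely needs the global constraint that every character of $b_j$ has occurred before, \emph{and} that these earlier occurrences sit in a region ($b_1\cdots b_{j-1}$, already known to equal $b_1^{\,j-1}$) whose character set is exactly that of $b_1$. So the characters in $b_j$ form a subset of the characters in $b_1$; combined with $|b_j|=|b_1|=\Delta$ and the order-isomorphism $b_1\op b_j$, a short counting/monotonicity argument gives that the sorted sequence of values in $b_j$ equals that of $b_1$, and then order-isomorphism pins down the positions, yielding $b_j=b_1$. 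Making the ``earlier occurrences lie in $b_1\cdots b_{j-1}$'' part precise — rather than somewhere in the unbounded prefix of $s$ preceding $u$, which does not occur here since $u$ is a prefix of $s$ — is exactly where the hypothesis ``$u$ is a prefix of $s$'' (so $b_1$ starts at position $1$) is used, and I would flag that this is essential and handle it explicitly.
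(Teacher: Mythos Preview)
Your contradiction strategy can be made to work, but Step~4 as written has a real gap. From Step~2 you correctly obtain $u=b_1^{\,f}b_{f+1}$ with all full blocks literally equal and $b_{f+1}$ a prefix of $b_1$, and using the primed assumption (no leftmost occurrence in $s[\Delta'+1..|u'|]\supseteq s[|u|+1..|u'|]$, since $\Delta'<2^{k-1}\le|u|$) you can indeed deduce $v[1..\Delta]=b_1$, hence $u'=b_1^{\,f}b_{f+1}b_1$. But this string is \emph{not} $\Delta$-periodic when $|b_{f+1}|>0$: its last $\Delta$ characters are $b_{f+1}\,b_1[1..\Delta-|b_{f+1}|]$, which equals $b_1$ only if $b_1$ itself has period $|b_{f+1}|$. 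So the chain ``$u'$ is $\Delta$-periodic $\Rightarrow v'=u'$'' breaks at the first link. The correct target is $u=v$, not $u'=v'$: since $u$ is literally $\Delta$-periodic and $u\op v$, the equalities $u[t]=u[t+\Delta]$ transfer to $v[t]=v[t+\Delta]$; combined with $v[1..\Delta]=b_1$ this forces $v=b_1^{\,f}b_{f+1}=u$, contradicting that $uv$ is an {\opsquare}.

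That said, Step~2 is doing far more work than the lemma needs, and the paper's proof is a short direct argument with no block-equality analysis. Apply Proposition~\ref{kociumaka} to $u'v'$: this gives a leftmost occurrence $i\in[1,|u'|]$ with $i+|u'|$ also a leftmost occurrence. If $i>\Delta'$ you are done on the primed side. Otherwise $i+|u'|\le\Delta'+|u'|=|u''|<2^{k}\le|uv|$, so $i+|u'|$ lies in $v$; then $u\op v$ makes $i+|u'|-|u|=i+\Delta$ a leftmost occurrence in $u$, and $i+\Delta>\Delta$ places it past $b_1$. Your repaired contradiction is essentially the contrapositive of this, reached via an unnecessary detour through literal block equality.
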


\begin{proof}
Let $\Delta=|u'|-|u|$ be the length of every block $b_{j}$ and $\Delta'=|u''|-|u'|$ be the length of every block $b'_{j}$.
By Proposition~\ref{kociumaka}, we know that there must be a leftmost occurrence $i$ that falls within $u'$
and its corresponding leftmost occurrence $i+|u'|$ that falls within $v'$.
If $s[i]$ belongs to a block $b'_{j}$ with $j\neq 1$ then we are done. Thus, we assume that $i$ belongs
to $b'_{1}$. 
We claim that the leftmost occurrence $i+|u'|$ falls within $v$. To verify this, we calculate:
\[
i+|u'| \leq \Delta' + |u'| = |u''| - |u'| + |u'| = |u''| < 2^{k} \leq |uv| .
\]
We have established that $i+|u'|$ is a leftmost occurrence and falls within $v$. Thus,
$s[i'] \neq s[i+|u'|]$ for every $i' \in [1, i+|u'|)$. Because $u\op v$, this then implies that 
$s[i'-|u|] \neq s[i+|u'|-|u|]$ for every $i' \in [|u|+1,i+|u'|)$.  Thus,
$i+|u'| - |u|$ is also a leftmost occurrence. We claim that $s[i+|u'|-|u|]$ cannot belong
to $b_{1}$. To verify this, we calculate:
\[
i+|u'|-|u| \geq 1 + |u'| - |u| = 1 + \Delta .
\]
Thus, we have found a leftmost occurrence $i+|u'|-|u|$ that falls within $u$ and belongs to a block $b_{j}$ with $j\neq 1$.
See Figure~\ref{fig:leftBj}.
 \end{proof}

\begin{figure}[ht]
\centerline{\includegraphics[scale=.9]{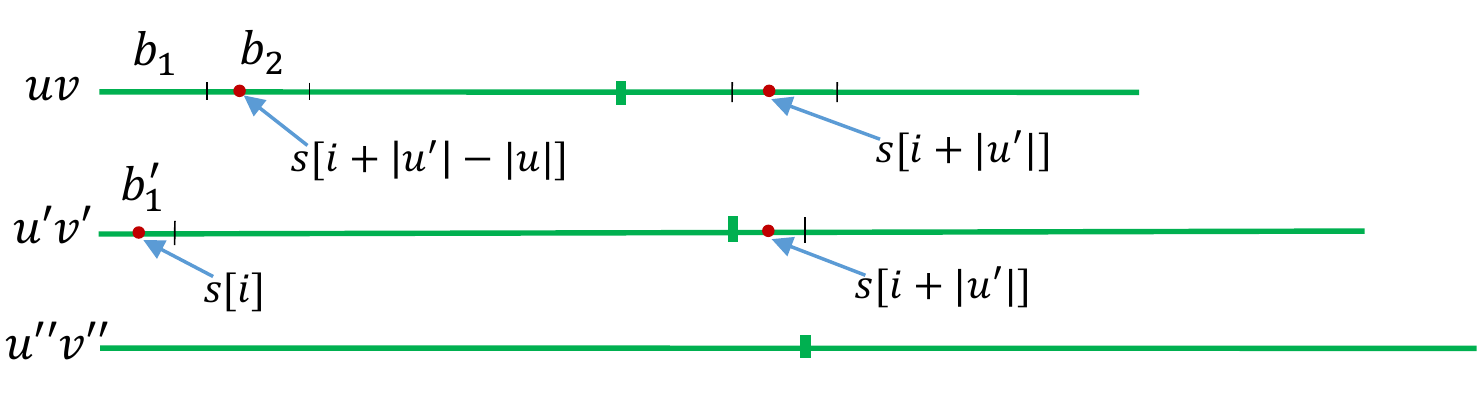}}
\caption{The red points correspond to the leftmost occurrences considered in the proof of Lemma~\ref{leftposition}.}
\label{fig:leftBj}
\end{figure} 

Next, we show that if $\{uv,u'v'\}\in O_{k}$ define blocks $b_{1},b_{2},\ldots,b_{f},b_{f+1}$ such that
there is a leftmost occurrence in block $b_{j}$ for some $j\neq 1$ then, in fact, there is a leftmost occurrence in every
block $b_{j}$. This reasoning is done in two steps.

\begin{lemma}
\label{leftblocks}
Let $b$ be an op-border of $u=s[1..|u|]$ that defines blocks $b_{1},b_{2},\ldots,b_{f},b_{f+1}$, and assume that there is a leftmost occurrence
in block $b_{j}$, for some $j\in [1,f+1]$. Then there is a leftmost occurrences in every block $b_{1},b_{2},\ldots,b_{j}$.
\end{lemma}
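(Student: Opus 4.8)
The plan is to work backwards from block $b_j$ towards $b_1$, using the order-isomorphisms $b_1\op b_2\op\ldots\op b_f$ together with the "prefix overlap" relation $b_1b_2\ldots b_f[1..|b_{f+1}|]\op b_2b_3\ldots b_fb_{f+1}$ that $b$ provides via Proposition~\ref{borderperiods}. The key observation is that each block $b_{t}$ starts at position $(t-1)\Delta+1$ in $s$ (where $\Delta=|b_1|$), so a leftmost occurrence inside $b_{t}$ at relative offset $r$ is a position $i=(t-1)\Delta+r$ with $s[i']\neq s[i]$ for all $i'<i$. I want to show that if such an $i$ exists with $t\geq 2$, then the corresponding position in $b_{t-1}$, namely $i-\Delta=(t-2)\Delta+r$, is also a leftmost occurrence; iterating this down to $b_1$ gives a leftmost occurrence in every block $b_1,\ldots,b_j$, and handles the (vacuous-below) case $j=1$ trivially.

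The core step is the "push one block to the left" claim: if $i$ is a leftmost occurrence lying in $b_t$ at offset $r$ with $2\le t\le f+1$, then $i-\Delta$ is a leftmost occurrence lying in $b_{t-1}$ at the same offset $r$. To prove it, note first that $i-\Delta\ge 1$ since $t\ge 2$. Because $b_{t-1}\op b_t$ (when $t\le f$) — or, when $t=f+1$, because $b_{t-1}[1..|b_{f+1}|]\op b_{f+1}$ — the characters $s[i-\Delta]$ and $s[i]$ occupy order-isomorphic positions within order-isomorphic blocks, hence they "see" the same order pattern against all other positions of $b_{t-1}$ versus $b_t$. More usefully, the relation $b_1b_2\ldots b_{f}[1..|b_{f+1}|]\op b_2b_3\ldots b_{f}b_{f+1}$ says that the length-$(|u|-\Delta)$ prefix of $u$ is order-isomorphic to the suffix $u[\Delta+1..|u|]$; under this isomorphism, position $i$ (which lies in the suffix part $u[\Delta+1..|u|]$, since $i\ge\Delta+1$) corresponds exactly to position $i-\Delta$ in the prefix part. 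So for every position $i'$ with $i'<i-\Delta$, we can compare: $s[i'-?]$ — here I would argue directly that $s[i'] \ne s[i-\Delta]$ for all $i' < i-\Delta$ as follows. Any such $i'$ lies in some block $b_{t'}$ with $t'<t-1$, hence $i'+\Delta$ lies in $b_{t'+1}$ with $t'+1<t$, so $i'+\Delta<i$; since $i$ is a leftmost occurrence, $s[i'+\Delta]\neq s[i]$; and by the op-border isomorphism applied to positions $i-\Delta$ and $i'$ inside the prefix $u[1..|u|-\Delta]$ matched against positions $i$ and $i'+\Delta$ inside $u[\Delta+1..|u|]$, we get $s[i-\Delta]=s[i']$ would force $s[i]=s[i'+\Delta]$, a contradiction. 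Positions $i'$ with $i-\Delta \le i' < i$ inside block $b_{t-1}$ or straddling into $b_t$ need the isomorphism $b_{t-1}\op b_t$ directly: $s[i'] = s[i-\Delta]$ with $i'$ in $b_{t-1}$ would (by $b_{t-1}\op b_t$) give $s[i'+\Delta]=s[i]$ with $i'+\Delta<i$, again contradicting leftmostness of $i$. So no position before $i-\Delta$ equals $s[i-\Delta]$, i.e. $i-\Delta$ is a leftmost occurrence, and it lies in $b_{t-1}$ at offset $r$.

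I expect the main obstacle to be the bookkeeping around the incomplete block $b_{f+1}$ and the exact offsets: when $t=f+1$ the relevant isomorphism is only the truncated one $b_{j'}[1..|b_{f+1}|]\op b_{f+1}$ rather than full $b_{j'}\op b_{j'+1}$, so the "shift by $\Delta$" argument must be phrased in terms of the prefix/suffix isomorphism $u[1..|u|-\Delta]\op u[\Delta+1..|u|]$, which is uniform and sidesteps the ragged end. Once the single-step push-left claim is established cleanly in that uniform language, the lemma follows by a trivial induction on the number of blocks between $b_j$ and $b_1$, with the base case $j=1$ being immediate.
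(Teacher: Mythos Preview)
Your approach is essentially the paper's: prove the single ``push one block left'' step using the uniform isomorphism $u[1..|u|-\Delta]\op u[\Delta+1..|u|]$, then induct. The paper states it in one line: from $s[i']\neq s[i]$ for all $i'\in[1,i)$ and the isomorphism, conclude $s[i'-\Delta]\neq s[i-\Delta]$ for all $i'\in[\Delta+1,i)$, i.e.\ $s[i'']\neq s[i-\Delta]$ for all $i''\in[1,i-\Delta)$. Two minor cleanups: your claim that any $i'<i-\Delta$ lies in a block $b_{t'}$ with $t'<t-1$ is false (it can lie in $b_{t-1}$ at an offset below $r$), but you never actually use it, since $i'+\Delta<i$ follows directly from $i'<i-\Delta$; and the separate case $i'\in[i-\Delta,i)$ is unnecessary, since showing $i-\Delta$ is a leftmost occurrence only requires $s[i'']\neq s[i-\Delta]$ for $i''<i-\Delta$.
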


\begin{proof}
Let $\Delta=|b_{1}|=|b_{2}|=\ldots =|b_{f}|$ and $|b_{f+1}|<\Delta$. By induction, it is enough to show that if there is
a leftmost occurrence in $b_{j}$ for some $j\geq 2$ then there is a leftmost occurrence in $b_{j-1}$. Let $i$ be a leftmost
occurrence that belongs to $b_{j}$. Then $u[i']\neq u[i]$ for every $i'\in [1,i)$. Because
$b_{1}b_{2}\ldots b_{f-1}b_{f}[1..|b_{f+1}|] \op b_{2}b_{3}\ldots b_{f}b_{f+1}$, this implies $u[i'-\Delta] \neq u[i-\Delta]$ for
every $i'\in [\Delta+1,i)$. But then $i-\Delta$ is also a leftmost occurrence, and it belongs to $b_{j-1}$ as required.
See Figure~\ref{fig:leftblocks}.
 \end{proof}

\begin{figure}[ht]
\centerline{\includegraphics[scale=.3]{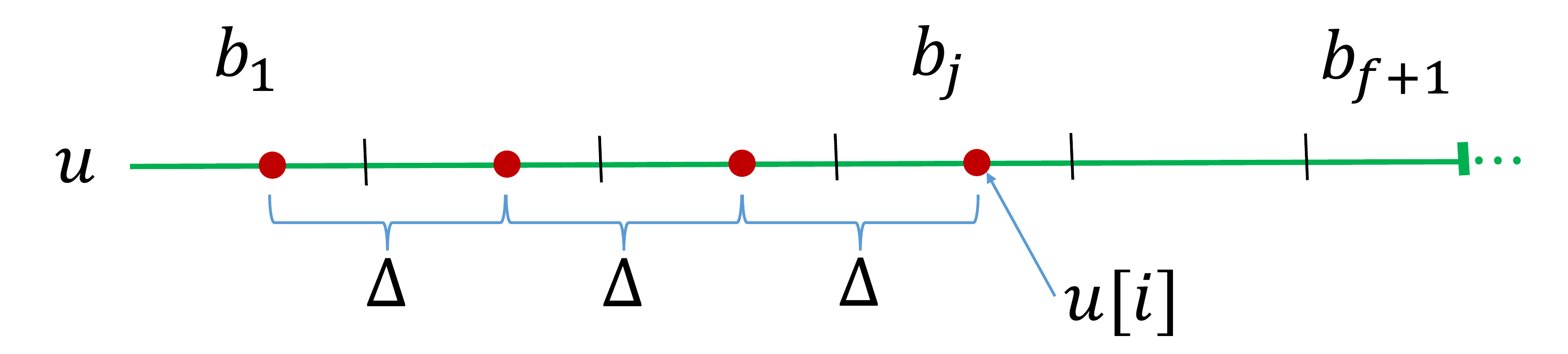}}
\caption{Each red point corresponds to a leftmost character at the same relative position in every block $b_{1},b_{2},\ldots,b_{j}$.}
\label{fig:leftblocks}
\end{figure}

\begin{lemma}
\label{rightblocks}
Let $b$ be an op-border of $u=s[1..|u|]$ that defines blocks $b_{1},b_{2},\ldots,b_{f},b_{f+1}$, and assume that there is a leftmost character
in block $b_{2}$. Then there is a leftmost character in every block $b_{1},b_{2},\ldots,b_{f}$.
\end{lemma}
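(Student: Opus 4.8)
The plan is to deduce the statement from Lemma~\ref{leftblocks}: it suffices to show that the last block $b_f$ contains a leftmost character, since then Lemma~\ref{leftblocks} applied with $j=f$ produces a leftmost character in every block $b_1,b_2,\ldots,b_f$. (If $f\le 1$ there is no block $b_2$ and the statement is vacuous, so assume $f\ge 2$ and set $\Delta=|b_1|=|b_2|=\ldots=|b_f|$.)

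The tool I would use is the order-isomorphism
\[ b_1b_2\ldots b_{f-1}\op b_2b_3\ldots b_f, \]
obtained by restricting the relation $b_1b_2\ldots b_{f}[1..|b_{f+1}|]\op b_2b_3\ldots b_f b_{f+1}$ for the blocks defined by an op-border (Proposition~\ref{borderperiods}) to its first $(f-1)\Delta$ characters. I will pair it with the following elementary fact: \emph{if two strings are order-isomorphic and contain exactly the same set of characters, then they are equal} --- an order-isomorphism induces an order-preserving bijection between the two character sets, and an order-preserving bijection of a finite set onto itself must be the identity.

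Now suppose, towards a contradiction, that $b_f$ contains no leftmost character. Since $b_1b_2\ldots b_f$ is a prefix of $s$, this means every character occurring in $b_f$ already occurs in $b_1b_2\ldots b_{f-1}$, and hence the set of characters of $b_2b_3\ldots b_f$ is contained in the set of characters of $b_1b_2\ldots b_{f-1}$. These two strings being order-isomorphic, they contain the same number of distinct characters, so the two character sets actually coincide, and by the fact above $b_1b_2\ldots b_{f-1}=b_2b_3\ldots b_f$; comparing the first $\Delta$ characters gives $b_1=b_2$. But then no character of $b_2$ occurs there for the first time, contradicting the hypothesis that $b_2$ contains a leftmost character. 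Therefore $b_f$ contains a leftmost character, and Lemma~\ref{leftblocks} completes the proof.

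I expect the main obstacle to be the middle of the contradiction argument: seeing that ``$b_f$ has no leftmost character'' is precisely the statement that the set of distinct characters does not grow when passing from $b_1b_2\ldots b_{f-1}$ to $b_2b_3\ldots b_f$ --- which is where the hypothesis that $b_1b_2\ldots b_f$ is a prefix of $s$ enters --- and then promoting the equality of character sets to an equality of the strings via order-isomorphy, which collapses $b_1$ and $b_2$ and contradicts the assumption on $b_2$.
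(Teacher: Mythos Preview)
Your argument is correct and takes a genuinely different, more economical route than the paper's. The paper proceeds constructively: it first notes that for each residue $x\in[1,\Delta]$ the column $s[x],s[x+\Delta],\ldots,s[x+(f-1)\Delta]$ is constant, strictly increasing, or strictly decreasing; the leftmost occurrence in $b_2$ sits in a non-constant column, and from there the paper walks upward through successive characters of $b_1$ until it reaches either the maximum of $b_1$ or a column that is not increasing, obtaining a position $i_\ell\in[1,\Delta]$ for which it then verifies by a case analysis that $i_\ell+p\Delta$ is a leftmost occurrence for every $p\in[0,f)$. In contrast, you bootstrap from Lemma~\ref{leftblocks}: it suffices to put a leftmost occurrence in $b_f$, and you obtain this by contradiction via the neat observation that two order-isomorphic strings over the \emph{same} character set must be equal, which collapses $b_1\ldots b_{f-1}$ with $b_2\ldots b_f$ and hence $b_1$ with $b_2$. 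Your proof sidesteps the monotone-column machinery and the case split; the price is that you only get existence of a leftmost occurrence in each block rather than a single aligned column of them, but the lemma statement and its later uses (Lemma~\ref{allblocks}, Lemma~\ref{op_in_s}, Lemma~\ref{fewoccurrences}) require only existence, so nothing is lost.
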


\begin{proof}
Let $\Delta=|b_{1}|=|b_{2}|=\ldots |b_{f}|$ and $|b_{f+1}|<\Delta$.
By assumption, there is a leftmost character $s[i]$ in block $b_{2}$, that is, $i\in [\Delta+1,2\Delta]$.
Our goal is to show that there is a leftmost character in every block $b_{1},b_{2},\ldots,b_{f}$.

Because $b_{1}b_{2}\ldots b_{f}[1..|b_{f+1}|]\op b_{2}b_{3}\ldots b_{f} b_{f+1}$,
each position $x\in [1..\Delta]$ satisfies exactly one of the following possibilities:
\begin{enumerate}
\item $s[x+p\cdot\Delta]$ is the same, for all integers $p\in [0,f)$,
\item $s[x+p\cdot\Delta]<s[x+(p+1)\cdot\Delta]$ for all integers $p\in [0,f-1)$,
\item $s[x+p\cdot\Delta]>s[x+(p+1)\cdot\Delta]$ for all integers $p\in [0,f-1)$.
\end{enumerate}
Note that $i_{0}=i-\Delta$ satisfies (2) or (3), because $s[i]$ is different than $s[1],s[2],\ldots,s[i-1]$, so in particular $s[i-\Delta]\neq s[i]$.
By reversing the order of the alphabet, it is enough to establish the lemma assuming that $i_{0}$ satisfies (2)
Furthermore, we choose some positions $i_{1},i_{2},\ldots,i_{\ell}$ in $b_1$ as follows.
Let $C$ be the set of characters that appear in $b_{1}$.
$i_{1}\in [1,\Delta]$ is chosen so that $s[i_{1}]$ is the strict successor of $s[i_{0}]$ in $C$,
then $i_{2}\in [1,\Delta]$ is chosen so that $s[i_{2}]$ is the strict successor of $s[i_{1}]$ in $C$,
and so on. If there are multiple choices for the next $i_{j}\in [1,\Delta]$ then we take the smallest. We stop when one of the following two
possibilities holds:
\begin{enumerate}[(a)]
\item $s[i_{\ell+1}]$ is not defined, i.e. $s[i_{\ell}]$ is the largest character in $b_{1}$.
\item $i_{\ell+1}$ satisfies (1) or (3).
\end{enumerate}
Note that, by definition, $i_0,i_1,\ldots,i_\ell$ all satisfy $(2)$. Further, $s[i_{0}]$ is a leftmost character
because $s[i]$ is a leftmost character, so $s[i']\neq s[i]$ for every $i'\in [1,i)$, and $b_{1}\op b_{2}$
so $s[i'-\Delta]\neq s[i-\Delta]$ for every $i'\in [\Delta+1,i)$.
Next, $s[i_{1}],\ldots,s[i_{\ell}]$ are all leftmost characters
because we are always choosing the smallest $i_{j}$ such that $s[i_{j}]$ is equal to a specific character, for $j=1,2,\ldots,\ell$.

We summarize the situation so far. For every integer $p\in [0,f)$, $s[i_{\ell}+p\cdot \Delta]$ belongs to block $b_{p+1}$,
and we want to show that it is a leftmost character. We know that $s[i_{\ell}]$ is a leftmost character, thus by $b_{1}\op b_{p+1}$
we obtain that $s[i_{\ell}+p\cdot \Delta]$ does not occur earlier in $b_{p+1}$. We need to establish that it also does not
occur earlier in $b_{1},b_{2},\ldots,b_{p}$. We separately consider the two possible cases (a) and (b).

\begin{enumerate}[(a)]

\item $s[i_{\ell+1}]$ is not defined, i.e. $s[i_{\ell}]$ is the largest character in $b_{1}$. We know that
$i_{\ell}$ satisfies (2), so $s[i_\ell]<s[i_{\ell}+\Delta]<\ldots<s[i_{\ell}+(p-1)\cdot \Delta] <s[i_{\ell}+p\cdot\Delta]$.
For all integers $q\in [0,p)$, by $b_{1}\op b_{q+1}$ we obtain that $s[i_{\ell}+q\cdot\Delta]$ is the largest character in $b_{q+1}$.
So in fact $s[i_{\ell}+p\cdot \Delta]$ is larger than all characters in the whole block $b_{q+1}$, for every integer $q\in [0,p)$,
making $i_{\ell}+p\cdot \Delta$ a leftmost occurrence.

\item $i_{\ell+1}$ is defined and satisfies (1) or (3),
so $s[i_{\ell+1}] \geq s[i_{\ell+1}+\Delta] \geq \ldots\geq s[i_{\ell+1}+(p-1)\cdot \Delta] \geq s[i_{\ell+1}+p\cdot\Delta]$. See Figure~\ref{fig:lemma12b}.
We know that $i_{\ell}$ satisfies (2), so $s[i_\ell]<s[i_{\ell}+\Delta]<\ldots<s[i_{\ell}+(p-1)\cdot \Delta] <s[i_{\ell}+p\cdot\Delta]$.
Recall that $s[i_{\ell+1}]$ is a strict successor of $s[i_{\ell}]$ in $b_{1}$. Thus, for every $i'\in [1,\Delta]$ we have
that $s[i']$ does not belong to the interval $(s[i_{\ell}],s[i_{\ell+1}])$. Because we have $b_{1}\op b_{q}$, for every integer $q\in [0,p)$, this implies
$s[i'+q\cdot \Delta]$ does not belong to the interval $(s[i_{\ell}+q\cdot \Delta],s[i_{\ell+1}+q\cdot \Delta])$.
As observed earlier, $s[i_{\ell}+q\cdot \Delta] < s[i_{\ell}+p\cdot \Delta]$ and $s[i_{\ell+1}+q\cdot \Delta] \geq s[i_{\ell+1}+p\cdot \Delta]$.
We conclude that, for every $i'\in [1,\Delta]$, we have that $s[i'+q\cdot \Delta]$ does not belong to the interval
$[s[i_{\ell}+p\cdot \Delta],s[i_{\ell+1}+p\cdot \Delta])$
(the interval is non-empty, as both positions belong to the same block $b_{q}$, and by $b_{q}\op b_{1}$ we have that
$s[i_{\ell+1}+q\cdot \Delta]$ is a strict successor of $s[i_{\ell}+q\cdot \Delta]$ in $b_{q}$).
In particular, $s[i'+q\cdot \Delta]\neq s[i_{\ell}+p\cdot \Delta]$, so $s[i_{\ell}+p\cdot \Delta]$ does not occur in $b_{q+1}$,
making it a leftmost character.
\end{enumerate}

We have established that, for every integer $p\in [0,f)$, $i_{\ell}+p\cdot \Delta$ is a leftmost occurrence.
\end{proof}
\begin{figure}[ht]
\centerline{\includegraphics[scale=.3]{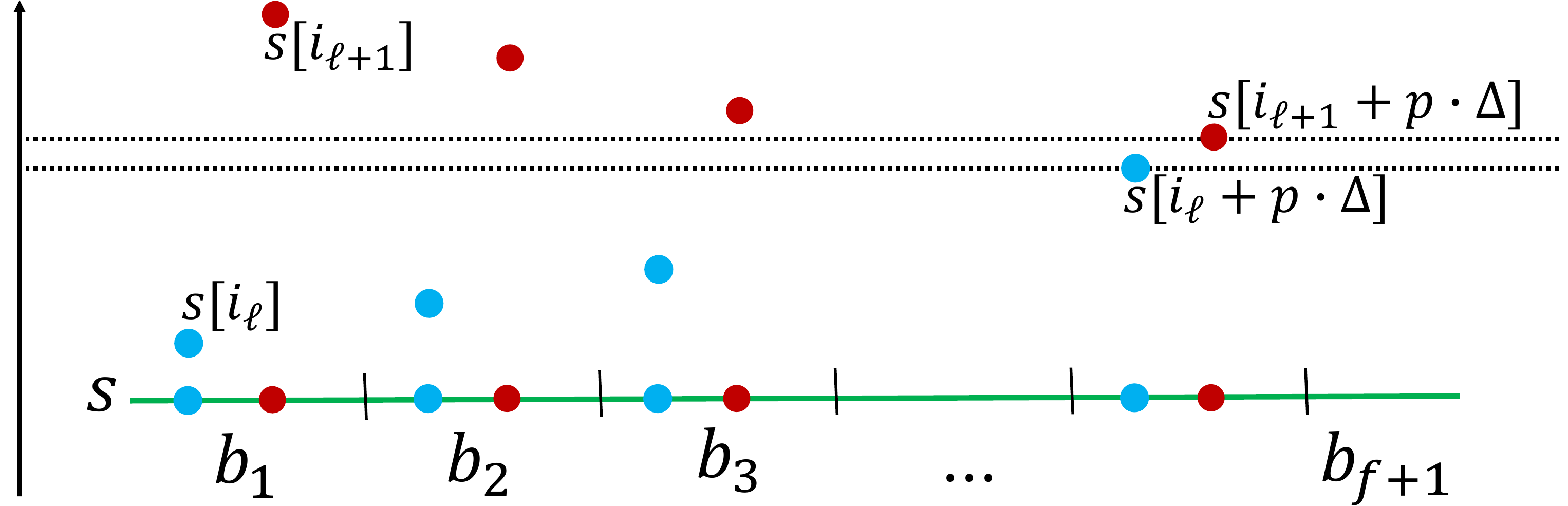}}
\caption{The red points correspond to $s[i_{\ell+1}],...,s[i_{\ell+1}+p\cdot\Delta]$. The blue points correspond to $s[i_{\ell}],...,s[i_{\ell}+p\cdot\Delta]$. The black arrow illustrates the character's axis.}
\label{fig:lemma12b}
\end{figure}
By combining the above lemmas we obtain the following conclusion.
\begin{lemma}
\label{allblocks}
If $|O_k|\geq 3$ then for any $\{uv,u'v',u''v''\}\in O_k$ where $|u|<|u'|<|u''|$ such that $\{uv,u'v'\}$ defines $b_1,\ldots,b_f,b_{f+1}$
and $\{u'v',u''v''\}$ defines $b'_1,\ldots,b'_f,b'_{f+1}$ there is a leftmost occurrence in every block $b_1,b_{2},\ldots,b_{f}$
or there is a leftmost occurrence in every block $b'_{1},b'_{2},\ldots,b'_{f'}$.
\end{lemma}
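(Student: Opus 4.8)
The plan is to simply chain the three preceding lemmas, with no new combinatorics required. First I would invoke Lemma~\ref{leftposition} on the given triple $\{uv,u'v',u''v''\}\in O_k$ (legitimate since $|O_k|\geq 3$): it produces either a leftmost occurrence in some block $b_j$ with $j\neq 1$, or a leftmost occurrence in some block $b'_{j'}$ with $j'\neq 1$. These two outcomes are symmetric — swapping the pair $\{uv,u'v'\}$ for $\{u'v',u''v''\}$ (equivalently, $u$ for $u'$, $\Delta=|u'|-|u|$ for $\Delta'=|u''|-|u'|$, and the blocks $b_i$ for $b'_i$) turns one into the other, and the conclusion we want is itself symmetric in the two block decompositions — so it suffices to treat the first outcome and assume a leftmost occurrence lies in $b_j$ for some $j\geq 2$.

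In that case I would apply Lemma~\ref{leftblocks}. Its hypothesis is met: by Proposition~\ref{prefixsuffix} the value $\Delta=|u'|-|u|$ is an op-border of $u=s[1..|u|]$, and $b_1,\dots,b_f,b_{f+1}$ are precisely the blocks it defines. Since there is a leftmost occurrence in $b_j$ with $j\geq 2$, Lemma~\ref{leftblocks} gives a leftmost occurrence in each of $b_1,b_2,\dots,b_j$; in particular, there is one in $b_2$.

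Finally, with a leftmost character in $b_2$ established, I would apply Lemma~\ref{rightblocks} to the same op-border $\Delta$ of $u$ and the same decomposition $b_1,\dots,b_f,b_{f+1}$; it propagates the property forward, yielding a leftmost character in every block $b_1,b_2,\dots,b_f$. That is exactly the first disjunct in the statement of the lemma. Running the mirror-image argument in the second outcome of Lemma~\ref{leftposition} — first Lemma~\ref{leftblocks} with the op-border $\Delta'$ of $u'$, then Lemma~\ref{rightblocks} — yields a leftmost occurrence in every $b'_1,\dots,b'_{f'}$, the second disjunct, and completes the proof.

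I do not anticipate any genuine obstacle here, since all the real work has been front-loaded into Lemmas~\ref{leftposition}, \ref{leftblocks}, and \ref{rightblocks}. The only points needing care are purely bookkeeping: verifying the hypotheses of each lemma are in force (that $\Delta$, resp.\ $\Delta'$, is a valid op-border of $u$, resp.\ $u'$, with the stated block decomposition), confirming that the occurrence handed back by Lemma~\ref{leftposition} really sits in a block of index at least $2$ so that Lemma~\ref{leftblocks} applies, noting that "leftmost occurrence" and "leftmost character" denote the same notion throughout, and spelling out cleanly why the two cases of Lemma~\ref{leftposition} are symmetric.
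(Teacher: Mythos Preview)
Your proposal is correct and follows essentially the same route as the paper: invoke Lemma~\ref{leftposition} on the triple, then use Lemma~\ref{leftblocks} to pull the leftmost occurrence down to $b_2$ (resp.\ $b'_2$), and finally Lemma~\ref{rightblocks} to spread it to all full blocks. One small bookkeeping slip to fix when you write it up: by Proposition~\ref{prefixsuffix} it is $|u|-\Delta$ (not $\Delta$ itself) that is the op-border of $u$, with $\Delta$ being the resulting initial op-period and hence the block length; the paper's own prose is loose on this point too, but the hypotheses of Lemmas~\ref{leftblocks} and~\ref{rightblocks} are stated in terms of an op-border, so phrase it accordingly.
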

\begin{proof}
Recall that by Lemma~\ref{prefixsuffix}, $|u'|-|u|$ is an op-border of $u=s[1..|u|]$ while $|u''|-|u'|$ is an op-border of $u'=s[1..|u'|]$.
By Lemma~\ref{leftblocks} there is a leftmost occurrence in $b_2$ or in $b'_2$. Then, by Lemma~\ref{rightblocks} applied either
to the blocks $b_{1},b_{2},\ldots,b_{f},b_{f+1}$ defined by the op-border $|u'|-|u|$ or the blocks $b'_{1},b'_{2},\ldots,b'_{f'},b'_{f'+1}$
defined by the op-border $|u''|-|u'|$, there is a leftmost occurrence in every block $b_{1},b_{2},\ldots,b_{f}$ or in every block
$b'_{1},b'_{2},\ldots,b'_{f'}$.
\end{proof}

We are now ready to upper bound $\sum_{k}|O_{k}|$ by the number of leftmost characters. We will show that, if some $O_{k}$
is large then there are many leftmost characters in some range. To this end, we define groups of leftmost occurrences.
Let $L_{k}$ consist of the leftmost occurrences $i$ such that $i\in [2^{k},2^{k+1})$:
\begin{definition}
$ L_k=\{ i \,|\, i\in[2^k,2^{k+1}) \wedge \forall_{j\in[1,i)} s[i]\neq s[j] \} \text{ for } 0\leq k\leq\log m. $
\end{definition}
Note that the groups are disjoint, i.e. $L_{k}\cap L_{k'}=\emptyset$ for any $k$ and $k'$. Thus $\sum_{k=0}^{\log m} |L_k| \leq \sigma$.
With this definition in hand, we are ready to show the main technical lemma.

\begin{lemma}
\label{op_in_s}
The number of {\opsquare}s that are prefixes of $s$ is $\Oh(\sigma)$.
\end{lemma}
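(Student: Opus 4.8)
The plan is to bound $\sum_{k=0}^{\log m}|O_k|$ by $\Oh(\sigma)$, using the fact that the groups $L_k$ of leftmost occurrences are disjoint and hence $\sum_k |L_k|\leq\sigma$. I would split the indices $k$ into two cases: those where $|O_k|$ is at most some absolute constant $c$ (say $c=4$), and those where $|O_k|>c$. The first case contributes at most $c(\log m + 1)$, but that is $\Oh(\log m)$, which is too weak unless $\sigma = \Omega(\log m)$ — so I need to be slightly more careful and instead argue that the "small" groups also contribute only $\Oh(\sigma)$ in total, or absorb them into the analysis of the large groups. Actually the cleaner route: show that if $|O_k|>c$ then $|O_k| = \Oh(|L_{k-2}|)$, and note that a nonempty $O_k$ forces a leftmost occurrence somewhere in $[1,2^{k+1})$; but to get the bound I want I should just handle the small groups by a separate global argument. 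Let me reconsider.

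The main step is: if $|O_k|\geq 3$ (say $|O_k|$ is larger than a fixed constant), then $|O_k| = \Oh(|L_{k-2}|)$. To prove this, apply Proposition~\ref{delta} to obtain $\{uv,u'v',u''v''\}\in O_k$ with $0<|u'|-|u|,|u''|-|u'|<2^k/(|O_k|-2)$; call these gaps $\Delta,\Delta'$, both of size $\Oh(2^k/|O_k|)$. By Lemma~\ref{allblocks}, either every block $b_1,\dots,b_f$ (each of length $\Delta$, with $f=\lfloor|u|/\Delta\rfloor \geq |u|/\Delta - 1 = \Omega(2^{k-1}/\Delta) = \Omega(|O_k|)$) contains a leftmost occurrence, or every block $b'_1,\dots,b'_{f'}$ does. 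Say it is the $b_j$'s. These blocks are consecutive, disjoint fragments of $s$ lying inside $u$, which is a prefix of $s$ of length $<2^k$; so we have $\Omega(|O_k|)$ distinct leftmost occurrences all lying in $[1,2^k)$. Now $[1,2^k) = \bigcup_{t<k} [2^t,2^{t+1}) \cup \{0\text{-ish low indices}\}$, and the number of leftmost occurrences in $[1,2^k)$ is $\sum_{t=0}^{k-1}|L_t| + \Oh(1)$. That only bounds $|O_k|$ by the number of leftmost occurrences below $2^k$, not by $|L_{k-2}|$ alone — so I cannot sum directly. The fix is the standard one: the blocks $b_1,\dots,b_f$ are at least $\Omega(|O_k|)$ in number and each has length $\Delta = \Oh(2^k/|O_k|)$; the last $\Omega(|O_k|)$ of them, or rather a constant fraction of them, lie in the range $[2^{k-2},2^{k-1}]$ (since $|u|\in[2^{k-1},2^k)$ and $\Delta$ is small, most blocks sit in the top half of $u$). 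Each such block contains a distinct leftmost occurrence, and those occurrences lie in $[2^{k-2},2^{k-1}) \subseteq L_{k-2}\cup L_{k-1}$. Hence $|O_k| = \Oh(|L_{k-2}| + |L_{k-1}|)$.

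Given that, I would finish as follows. For $k$ with $|O_k|$ above the fixed constant $c$, we have $|O_k|=\Oh(|L_{k-2}|+|L_{k-1}|)$, so $\sum_{k:\,|O_k|>c}|O_k| = \Oh(\sum_k |L_k|) = \Oh(\sigma)$. For the remaining $k$, each contributes at most $c$; but there can be up to $\log m + 1$ of them, giving $\Oh(\log m)$, which is only $\Oh(\sigma)$ when $\sigma = \Omega(\log m)$. To handle all $\sigma$, I observe that $O_k$ can be nonempty only if the suffix $s$ has length at least $2^k$ and contains an {\opsquare} of that length, which in particular requires (by Proposition~\ref{kociumaka}) two leftmost occurrences, one in $u$ and one in $v$ with $|v|=|u|\geq 2^{k-1}$; more usefully, if $|O_k|\geq 1$ and $k\geq 2$, then the left arm $u$ of any member has length $\geq 2^{k-1}$, and inside the fragment $s[1..2^k)$ there are at least... hmm, this still only gives "$\geq 1$ leftmost occurrence per nonempty level beyond a point." Actually that is exactly enough: if $|O_k|\geq 1$ with $k\geq 2$ then $L_{k-1}\cup L_{k-2}\neq\emptyset$ — because any {\opsquare} $uv$ prefix of $s$ of length in $[2^k,2^{k+1})$ has, by Proposition~\ref{kociumaka}, a leftmost occurrence $i-|u|$ in $u$ and $|u|\geq 2^{k-1}$, so $i-|u|$ could be small; that's not it either. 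Let me instead just note that a nonempty $O_k$ with $k\geq 1$ forces at least one leftmost occurrence in $[2^{k-1}, 2^{k+1})$ via the block argument with any single member (the blocks of a single {\opsquare} pair obtained from $|O_k|\geq 2$; for $|O_k|=1$ use the trivial leftmost occurrence inside $u$). This still needs $|O_k|\geq 2$. The cleanest patch: only finitely many levels $k$ can have $|O_k|\geq 1$ while $L_{k'}=\emptyset$ for all $k'\leq k$, namely those with $2^k \leq (\text{first leftmost occurrence}) \cdot 2 = $ small constant — i.e., at most $\Oh(1)$ such levels. So the total over small-$|O_k|$ levels is $\Oh(1) + c\cdot\#\{k : |O_k|\in[1,c], L_{k-2}\cup L_{k-1}\neq\emptyset\} = \Oh(\sigma)$, since each nonempty $L_j$ is "charged" by at most two values of $k$.

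The hard part is the block-to-leftmost-occurrence counting bookkeeping: turning "every block $b_j$ contains a leftmost occurrence" into a clean bound "$|O_k| = \Oh(|L_{k-2}|+|L_{k-1}|)$" by checking that $\Omega(|O_k|)$ of the blocks genuinely sit in the dyadic range $[2^{k-2},2^{k-1})$, and correctly handling the low-index levels and the small-$|O_k|$ levels so that the final sum telescopes to $\Oh(\sigma)$ for every value of $\sigma$ (not just large $\sigma$). All the structural work — Propositions~\ref{prefixsuffix},~\ref{delta} and Lemmas~\ref{leftposition},~\ref{leftblocks},~\ref{rightblocks},~\ref{allblocks} — is already in place, so this lemma is really an aggregation argument on top of them; the only genuine risk is an off-by-a-dyadic-level error, which is why I would be explicit about which blocks land in which range.
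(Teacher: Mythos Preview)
Your overall plan matches the paper's proof: split into large and small $|O_k|$, use Proposition~\ref{delta} and Lemma~\ref{allblocks} in the large case to exhibit $\Omega(|O_k|)$ blocks inside $[2^{k-2},2^{k-1})$ and hence $|O_k|=\Oh(|L_{k-2}|)$, and then sum. The large case is fine (the paper gets the slightly sharper $|O_k|=\Oh(|L_{k-2}|)$ rather than $\Oh(|L_{k-2}|+|L_{k-1}|)$, but your version sums the same way).

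The gap is in your handling of the small-$|O_k|$ levels. You correctly recognise that $\Oh(\log m)$ is not good enough and that you must show each nonempty $O_k$ forces some nearby $L_j$ to be nonempty, but you never actually land on an argument: you try the leftmost occurrence $i-|u|$ in the \emph{left} arm (which can be as small as $1$, so useless), then try to invoke blocks (which needs $|O_k|\geq 2$), and your final ``cleanest patch'' charges to $L_{k-2}\cup L_{k-1}$ without ever establishing that this set is nonempty. The fix is the observation you skipped over: look at the leftmost occurrence $j$ in the \emph{right} arm $v$ given by Proposition~\ref{kociumaka}. Since $|u|\in[2^{k-1},2^k)$ and $|uv|\in[2^k,2^{k+1})$, we have $j\in[|u|+1,|uv|]\subseteq(2^{k-1},2^{k+1})$, so $j\in L_{k-1}\cup L_k$. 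Hence $O_k\neq\emptyset$ with $k\geq 2$ implies $L_{k-1}\cup L_k\neq\emptyset$, and each nonempty $L_j$ is charged by at most two values of $k$, giving $\sum_{k\geq 2:\,|O_k|\leq c}|O_k|\leq 2c\sigma$. This is exactly what the paper does (with $c=10$, charging to $L_{k-1}\cup L_k$, not $L_{k-2}\cup L_{k-1}$).
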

\begin{proof}
To establish the lemma we want to connect $|O_{k}|$ with $|L_{k}|$, and then sum over all possible values of $k$.
$k=0,1$ will be considered separately, and for larger $k$ we apply different arguments for $|O_{k}|\geq 11$ and $|O_{k}| \leq 10$.

We first consider $k\geq 2$ such that $|O_{k}| \geq 11$. In particular, $|O_{k}| \geq 3$, so
by Proposition~\ref{delta}, there  exist $\{uv,u'v',u''v''\}\in O_k$ such that
$0 < |u'|-|u|, |u''|-|u'| < 2^{k} / (|O_k|-2)$.
By Lemma~\ref{allblocks}, for any $\{uv,u'v',u''v''\}\in O_k$ where $|u|<|u'|<|u''|$ such that $\{uv,u'v'\}$ defines $b_1,\ldots,b_f,b_{f+1}$
in $u$ and $\{u'v',u''v''\}$ defines $b'_1,\ldots,b'_{f'},b'_{f'+1}$ in $u'$ either there is a leftmost occurrence in every block
$b_{1},b_{2},\ldots,b_{f}$ or there is a leftmost occurrence in every block $b'_{1},b'_{2},\ldots,b'_{f'}$
In either case, we have found $\{uv,u'v'\}\in O_{k}$ with $0 < \Delta < 2^{k} / (|O_k|-2)$,  where $\Delta=|u'|-|u|$, such that
$\{uv,u'v'\}$ defines $b_1,\ldots,b_f,b_{f+1}$ with $f=\lfloor |u| / \Delta \rfloor $
and there is a leftmost occurrence in every block $b_{1},b_{2},\ldots,b_{f}$.
We want to establish a lower bound on the number of leftmost occurrences in $L_{k-2}$.
To this end, it is enough to show a lower bound on the number of blocks $b_{i}$ that are fully contained
in the range $[2^{k-2},2^{k-1})$. Recall that $|u|\in [2^{k-1},2^{k})$, and $u=b_{1}b_{2}\ldots b_{f}b_{f+1}$.
Thus, $u[2^{k-2}..2^{k-1}-1]$ consists of a suffix (possibly empty) of some $b_{j}$, then $b_{j+1}, b_{j+2},\ldots,b_{j+\ell}$,
and then a prefix of $b_{j+\ell+1}$ (where $b_{j+\ell+1}$ might be the incomplete block $b_{f+1}$ that should not be counted
in the lower bound). Thus, the number of blocks $b_{i}$ that are fully contained in the range $[2^{k-2},2^{k-1})$
is at least $\lfloor 2^{k-2}/\Delta \rfloor-1$. See Figure~\ref{fig:LK-2}. Combining this with the upper bound
on $\Delta$, we obtain the following inequality:
\[ |L_{k-2}| \geq \left\lfloor\dfrac{2^{k-2}}{\Delta}\right\rfloor - 1 \geq \dfrac{2^{k-2}}{\Delta}-2 > \dfrac{|O_k|-2}{4}-2 = \dfrac{|O_{k}|-10}{4} . \]
Using the assumption $|O_{k}| \geq 11$, we conclude that $|L_{k-2}| > |O_{k}|/44$. Hence:
\[ \sum_{k \geq 2 : |O_{k}| \geq 11} |O_{k}| < \sum_{k \geq 2} 44 \cdot |L_{k-2} | \leq 44\sum_{k} |L_{k} | \leq 44 \cdot \sigma .
\]

\begin{figure}[ht]
\centerline{\includegraphics[scale=.43]{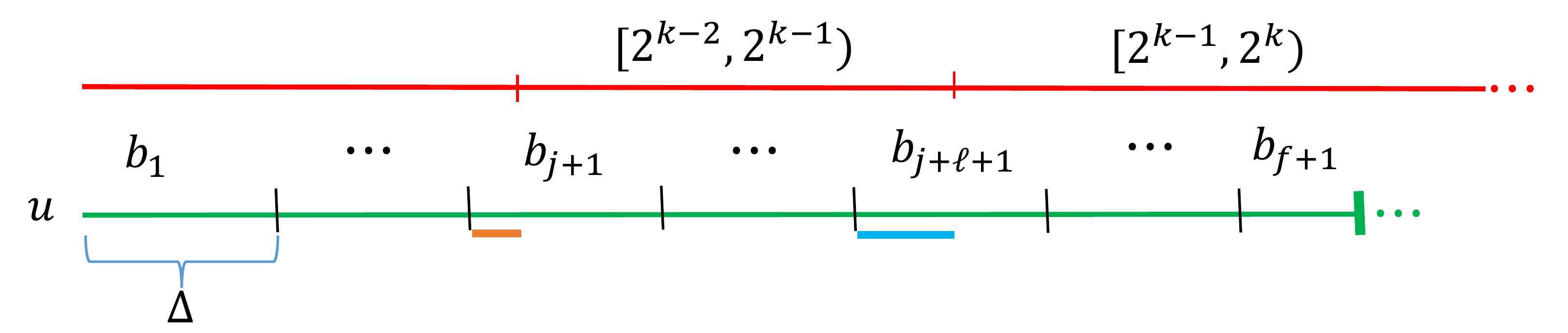}}
\caption{The orange line corresponds to the suffix of $b_{j}$. The blue line corresponds to the prefix of $b_{j+\ell+1}$. The red
line illustrates the ranges.}
\label{fig:LK-2}
\end{figure}

Next, we consider $k\geq 2$ such that $|O_{k}| \leq 10$. Of course, we have the trivial upper bound
$\sum_{k : |O_{k}| \leq 10} |O_{k}| \leq \sum_{k=0}^{\log m}10=\Oh(\log m)$.
As in the previous case, we want to use the leftmost occurrences to improve the bound.
Recall that, by Proposition~\ref{kociumaka}, every {\opsquare} $uv\in O_{k}$ is defined by a pair of leftmost occurrences $i$ and $j$,
where $i$ belongs to $u$ and $j$ belongs to $v$. Because $|uv|\in [2^{k},2^{k+1})$, we conclude that $j$
falls within the range $[2^{k-1}+1,2^{k+1})$, so must belong to $L_{k-1}\cup L_{k}$.
Hence, $O_{k}$ can be non-empty only when $L_{k-1}$ or $L_{k}$ is non-empty. Hence:
\begin{alignat*}{3}
\sum_{k \geq 2 : |O_{k}| \leq 10} |O_{k}| \leq \sum_{k \geq 2 : |O_{k}|>0} 10 &\leq \sum_{k \geq 2 : |L_{k-1}|>0} 10 &&+ \sum_{k \geq 2 : |L_{k}|>0} 10 && \\
& \leq  10 \sum_{k\geq 2} |L_{k-1}| &&+ 10\sum_{k\geq 2} |L_{k}| &&\leq 20\sigma .
\end{alignat*}

To upper bound $\sum_{k} |O_{k}|$, we split the sum into three parts. For $k=0,1$, we have $|O_{0}| \leq 1$ and $|O_{1}| \leq 2$.
Then, for $k\geq 2$ we separately consider all $k$ with $|O_{k}| \geq 11$ and $|O_{k}| \leq 10$ and plug in the above upper bounds.
Overall, we obtain:
\[ \sum_{k} |O_{k}| \leq 1 + 2 + 44\cdot \sigma + 20\cdot \sigma = \Oh(\sigma). \]
Thus, the number of {\opsquare}s that are prefixes of $s$ is $\Oh(\sigma)$.  
\end{proof}

We conclude the section with the main theorem.
\begin{theorem}
\label{optotal}
The number of {\opsquare}s in a string $s$ of length $n$ over an alphabet of size $\sigma$ is $\Oh(\sigma n)$.
\end{theorem}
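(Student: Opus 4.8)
The plan is to derive Theorem~\ref{optotal} as an immediate corollary of Lemma~\ref{op_in_s} by decomposing the input into its suffixes. All of the combinatorial work has already been done in Lemma~\ref{op_in_s}, which shows that any single string has only $\Oh(\sigma)$ {\opsquare}s among its prefixes; the remaining task is essentially bookkeeping.

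Concretely, I would argue as follows. Every occurrence of an {\opsquare} in the input string $s$ of length $n$ is a fragment $s[i..j]$, and such a fragment is precisely a prefix of the suffix $s[i..n]$. For each fixed starting position $i\in\{1,\ldots,n\}$, the suffix $s[i..n]$ is a string over the same alphabet $\Sigma$ of size $\sigma$, so Lemma~\ref{op_in_s} applied to $s[i..n]$ shows that there are $\Oh(\sigma)$ {\opsquare}s that are prefixes of $s[i..n]$, i.e.\ $\Oh(\sigma)$ occurrences of {\opsquare}s beginning at position $i$. Summing over all $n$ choices of $i$ gives a total of $\Oh(\sigma n)$ occurrences of {\opsquare}s in $s$, hence also at most $\Oh(\sigma n)$ {\opsquare}s that are distinct as words, since each distinct word has at least one occurrence.

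There is essentially no obstacle in this final step; the only point to keep track of is that the per-suffix count bounds the number of occurrences (one per admissible length) rather than the number of distinct words, but for an upper bound this distinction is harmless. Combined with the construction of Theorem~\ref{thm:oplower}, this establishes both directions of Theorem~\ref{bound}.
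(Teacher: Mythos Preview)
Your proposal is correct and mirrors the paper's own proof: apply Lemma~\ref{op_in_s} to each suffix $s[i..n]$ to get $\Oh(\sigma)$ {\opsquare} prefixes, then sum over the $n$ suffixes. The extra remark about occurrences versus distinct words is a fine clarification but not needed for the argument.
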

\begin{proof}
We consider each suffix of $s$ separately. For each suffix $s[i..n]$, we apply Lemma~\ref{op_in_s} to conclude that
the number of {\opsquare}s that are prefixes of $s[i..n]$ is upper bounded by $\Oh(\sigma)$. Thus, summing over all
$i$ we obtain that the number of {\opsquare}s in $s$ is $\Oh(\sigma n)$.
\end{proof}


\section{Algorithm}
\label{algo}
In this section, we describe the algorithm that reports all occurrences {\opsquare}s in a string $w[1..n]$
over an alphabet of size $\sigma$ in $\Oh(\sigma n)$ time.

The high-level idea of the algorithm is to generate $\Oh(\sigma n)$ candidates for {\opsquare}s and then test each of them
in constant time. 
To this end, we first describe a mechanism for checking if $w[i..i+\ell-1] \op  w[i+\ell..i+2\ell-1]$ in constant
time. This can be implemented with an LCA query on the order-preserving suffix tree of $w$, as explained in~\cite{OpSuffix}.
However, we need to explain how to construct this structure in $\Oh(\sigma n)$ time.

Following~\cite{OpSuffix}, for a string $w[1..n]$ we define $\code(w)$ as $(\phi(w,1),\phi(w,2),\ldots,\phi(w,n))$,
where $\phi(w,i)=(\prev_{<}(w,i),\prev_{=}(w,i))$ and $\prev_{<}(w,i)=|\{k < i : w[k]<w[i] \}|$, $\prev_{=}(w,i)=|\{k < i : w[k]=w[i] \}|$.
We observe that $\code(w)=\code(w')$ if and only if $w\op w'$.
Then, the order-preserving suffix tree of $w[1..n]$ is the compacted trie of all strings of the form $\code(w[i..n])\#$,
for $i=1,2,\ldots,n$. It is easy to see that $w[i..i+\ell-1] \op  w[i+\ell..i+2\ell-1]$ if and only if the lowest
common ancestor of the leaves corresponding to $\code(w[i..n])\$$ and $\code(w[i+\ell..n])\$$ is at string depth at least
$\ell$. Therefore, assuming that we have already built the order-preserving suffix tree of $w[1..n]$, such a test can be
implemented in constant time after $\Oh(n)$ preprocessing for LCA queries~\cite{HarelT84}. It remains to explain how to
construct the order-preserving suffix tree. We stress that while \cite{OpSuffix} does provides an efficient $\Oh(n\log n/\log\log n)$
time construction algorithm (in fact, the full version~\cite{CROCHEMORE2016122} further improves the time complexity
to $\Oh(n\sqrt{\log n})$), such complexity is incompatible with our goal.

\begin{restatable}{lemma}{opsuffixtree}
\label{op_suffix_tree}
Given a string $w[1..n]$ over an alphabet of size $\sigma$, we can construct its order-preserving suffix tree in $\Oh(\sigma n)$ time and space.
\end{restatable}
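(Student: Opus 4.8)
The plan is to build the order-preserving suffix tree of $w[1..n]$ incrementally, inserting the suffixes $w[1..n], w[2..n], \ldots, w[n..n]$ in this order (decreasing length), by an adaptation of McCreight's algorithm in which the usual suffix links are replaced by \emph{op-suffix links}. Two ingredients keep the basic operations cheap. First, every explicit node corresponds to a code $\code(w[j..j+d-1])$ for some occurrence $j$ and depth $d$, and its children correspond to the distinct ways a single further character can be appended to a window order-isomorphic to $w[j..j+d-1]$; since such a character is classified by the ``slot'' it occupies among the at most $\sigma$ distinct values of the window, each node has $\Oh(\sigma)$ children. Second, after an $\Oh(\sigma n)$-time preprocessing that computes the tables $C_v(p)=|\{k\le p:\ w[k]=v\}|$ for all $v\in\Sigma$ and $p\in[0,n]$, any single code symbol $\phi(w[a..b],t)=(\prev_<(w[a..b],t),\prev_=(w[a..b],t))$ can be evaluated in $\Oh(\sigma)$ time, because $\prev_<$ equals $\sum_{v<w[a+t-1]}\bigl(C_v(a+t-2)-C_v(a-1)\bigr)$ and $\prev_=$ is a single difference. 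Hence comparing the next symbol of $w[i..n]$ against the next symbol of an edge label, or locating the child of an explicit node whose edge starts with a prescribed symbol, both cost $\Oh(\sigma)$.

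The op-suffix link of an explicit node $x=\code(w[j..j+d-1])$ points to the locus of $\code(w[j+1..j+d-1])$. This is well defined, since order-isomorphic windows have order-isomorphic suffixes obtained by deleting the first position, and those suffixes have equal codes; and the target is reachable in the current tree, since a branching node's window occurs at some position $j\ge 2$, so $\code(w[j+1..j+d-1])$ is a prefix of the already-inserted suffix $w[j+1..n]$. Crucially — and this is where order-isomorphism departs from the classical setting — the code of a suffix is \emph{not} a suffix of $\code(w)$, so the op-suffix link is not a literal ``delete one symbol'' operation on code strings, and its target may be an \emph{implicit} node even when $x$ is explicit (this already happens for short windows, e.g. windows of $(2,1,3)$-type inside a suitable string). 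This is exactly the situation McCreight's algorithm copes with: to insert $w[i..n]$ we go to the parent $u$ of the previous head, follow its op-suffix link, \emph{rescan} along the edge label $\beta$ that led from $u$ to the previous head — splitting an edge if the rescan stops in its interior, which is how the possibly-implicit op-suffix-link target is made explicit — and then \emph{slow-scan} one code symbol at a time down to the new head, where we split and attach the new leaf. Rescanning requires regenerating $\beta$'s code symbols relative to the shifted window; each such symbol is again produced in $\Oh(\sigma)$ time from the $C_v$ tables, using a representative occurrence stored at $u$.

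The running time then follows from the standard potential argument, unaffected by the $\Oh(\sigma)$-per-symbol cost. Writing $\ell_i$ for the string depth of the head used when inserting $w[i..n]$, one has $\ell_{i+1}\ge\ell_i-1$ for the reason above, so the total length of all slow scans telescopes to $\Oh(n)$ symbols; the total number of edges touched by all rescans is likewise $\Oh(n)$; and $\Oh(n)$ nodes are ever created. Since each scanned symbol and each skipped edge costs $\Oh(\sigma)$, the construction runs in $\Oh(\sigma n)$ time, and the space is $\Oh(\sigma n)$, dominated by the $C_v$ tables.

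The main obstacle is verifying that, although op-suffix links are not a purely syntactic operation on codes, (i) their targets always lie in the current tree, (ii) rescanning along a known window can still be performed in time proportional to the number of edges skipped, and (iii) the classical depth-potential bound carries over unchanged. Once these are in place, the rest is the routine bookkeeping of McCreight's construction with ``character'' read as ``code symbol'' and all symbol manipulations routed through the $C_v$ tables.
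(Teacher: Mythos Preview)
Your approach is genuinely different from the paper's, and it contains a real gap in exactly the place you flag as ``the main obstacle'' but then wave away.

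The paper does not adapt McCreight directly. It invokes the Cole--Hariharan framework for \emph{quasi-suffix collections}~\cite{ColeH00}, which was designed precisely for the situation you describe: the suffix link of a branching node may land on an implicit node. The paper then supplies a \emph{constant-time} character oracle (not $\Oh(\sigma)$-time) via the 2D prefix-sum table $\cnt[i,x]=|\{k<i:w[k]<x\}|$, and derandomises Cole--Hariharan by observing that every node has degree $\Oh(\sigma)$, so the per-node dictionaries can be plain lists.

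Your proposal instead tries to run McCreight verbatim and asserts that ``the classical depth-potential bound carries over unchanged.'' It does not. McCreight's rescan amortisation rests on the lemma that for every internal node $u$, $\text{node-depth}(s(u))\ge\text{node-depth}(u)-1$. That lemma holds in ordinary suffix trees because all proper ancestors of $u$ are branching, and hence so are their suffix-link images, which then form $\text{node-depth}(u)-1$ distinct explicit ancestors of $s(u)$. In the order-preserving setting you yourself give the reason this fails: the op-suffix link of a branching node can be implicit. Concretely, $(2,1,3)$ and $(3,1,2)$ share the length-$2$ code but diverge at position $3$, yet after deleting the first character, $(1,3)$ and $(1,2)$ have the \emph{same} code at position $2$. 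Once the node-depth inequality is gone, the telescoping sum $\sum_i(d_{i+1}-d_i+2)$ no longer bounds the total number of rescanned edges, and nothing else in your outline does either. Replacing node depth by string depth does not help: the rescanned string length is $|\beta_i|$, the parent-to-head edge length, and $\sum_i|\beta_i|$ is not $\Oh(n)$ in general. This is exactly the difficulty Cole--Hariharan's analysis is built to overcome; it is not routine bookkeeping.

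A secondary point: your oracle spends $\Oh(\sigma)$ per code symbol because you sum $C_v$ over $v<w[a+t-1]$. One further cumulative sum over $v$ turns your $C_v(p)$ into the paper's $\cnt[p,x]$ and makes the oracle constant-time; this does not affect your claimed bound but is worth noting.
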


\begin{proof}
As explained in~\cite{OpSuffix}, the order-preserving suffix tree of $w[1..n]$ can be constructed using the general framework
of Cole and Hariharan~\cite{ColeH00} for constructing a suffix tree for a quasi-suffix collection of strings $w_{1},w_{2},\ldots,w_{n}$.
The running time of their algorithm is $\Oh(n)$ with almost inverse exponential failure probability, assuming that one can
access the $j$-th character of any $w_{i}$ in constant time. The mechanism for accessing the $j$-th character of $w_{i}$
is called the \textit{character oracle}. In this particular application, $w_{i}=\phi(w[i..n])\#$. We will first describe how to
implement a constant-time character oracle for such strings, and then
explain why randomization is not needed in our setting.

We need to implement a new \textit{character oracle} that returns $\phi(w[i..n],j)$, for any $i,j$, in constant time after
$\Oh(\sigma n)$ time and space preprocessing. This requires being able to calculate $\prev_{<}(w[i..n],j)$ and $\prev_{=}(w[i..n],j)$
in constant time. To this end, we define a two-dimensional array $\cnt[i,x]=|\{ k : k<i, w[k] < x \}|$,
for $i=0,1,\ldots,n$ and $x=0,1,\ldots,\sigma$. All entries
in this array can be computed in $\Oh(\sigma n)$ total time and space. Then, we can calculate any $\prev_{<}(w[i..n],j)$ and $\prev_{=}(w[i..n],j)$ as follows:
\begin{align*}
\prev_{<}(w[i..n],j) =& \cnt[i+j-2,w[i+j-1]] - \cnt[i-1,w[i+j-1]]\\
\prev_{=}(w[i..n],j) =& (\cnt[i+j-2,w[i+j-1]] - \cnt[i+j-2,w[i+j-1]-1]) \\
& - (\cnt[i-1,w[i+j-1]] - \cnt[i-1,w[i+j-1]-1]) .
\end{align*}

To remove randomization, we observe that its only source in the algorithm of Cole and Hariharan is the need to
maintain, for each explicit node of the current tree, a dictionary indexed by the next character on an outgoing edge.
If we could show that there are at most $\Oh(\sigma)$ such edges, then the dictionary could be implemented as a simple list,
increasing the construction time to $\Oh(\sigma n)$, which is within our claimed bound.

Consider a non-leaf node $v$ of the current tree. It corresponds to a proper prefix of some $\code(w[i..n])\#$, which by
the definition of $\code(.)$ is equal to $\code(w[i..j])$, for some $j$. Let $c_{1}<c_{2}<\ldots <c_{k}$ be the distinct characters
of $w[i..j]$, and denote by $occ_{x}$ the number of occurrences of $c_{x}$ in $w[i..j]$. Now consider an edge outgoing from
$v$, and let $\code(w[i'..j'+1])$ correspond to the first node (implicit or explicit) after $v$ there. We know that
$\code(w[i'..j'])=\code(w[i..j])$, so the distinct characters of $w[i'..j']$ are $c'_{1}<c'_{2}<\ldots<c'_{k}$ with
$occ_{x}$ being the number of occurrences of $c'_{x}$ in $w[i'..j']$. Then, we analyze the possible values of
$(\prev_{<}(w[i'..j'+1],j'-i'+2),\prev_{=}(w[i'..j'+1],j'-i'+2))$, that is, the first character on the considered edge.
The first number is always equal to $\sum_{y=1}^{x-1}c_{y}$, for some $x\in [1,k+1]$. Then, the second
number is either $0$ or $occ_{x}$. Thus, overall we have only $2k\leq 2\sigma$ possible first characters,
which bounds the degree of any $v$ by $\Oh(\sigma)$.
\end{proof}

The main part of the algorithm is efficiently generating $\Oh(\sigma n)$ candidates for {\opsquare}s. Then,
each of them is tested in constant time as explained above, assuming the preprocessing from Lemma~\ref{op_suffix_tree}.
As in the proof of the $\Oh(\sigma n)$ upper bound on the number of {\opsquare}s, we will consider the suffixes
of the input string $w[1..n]$ one-by-one. Let $s=w[i..n]$ be the currently considered suffix, and $x_{1}<x_{2}<\ldots<x_{t}$
be the leftmost occurrences in $s$. By spending $\Oh(\sigma)$ time per each suffix, we can assume
that the positions $x_{1},x_{2},\ldots,x_{t}$ are known, as after moving from $w[i..n]$ to $w[i-1..n]$ we only have
to insert the new leftmost occurrence $i-1$ and possibly remove the previous leftmost occurrence $i'$ such
that $w[i-1]=w[i']$ (unless $w[i-1]$ has not been seen before), which can be done in $\Oh(t)=\Oh(\sigma)$ time.
By Proposition~\ref{kociumaka}, every prefix of $s$ that is an {\opsquare} can be obtained by choosing two leftmost
characters at positions $x_{i}$ and $x_{j}$, where $i<j$, and setting the length of the possible square to be $2(x_{j}-x_{i})$. This
gives us $\Oh(\sigma^{2})$ candidates for prefixes that could be {\opsquare}s. However, our goal is to
generate only $\Oh(\sigma)$ such candidates. Recall that all leftmost occurrences are partitioned into
groups $L_{0},L_{1},\ldots$. We first prove that it is enough to consider $x_{j}$ that is the smallest or the largest
element in its group.

\begin{figure}[ht]
\centerline{\includegraphics[scale=.43]{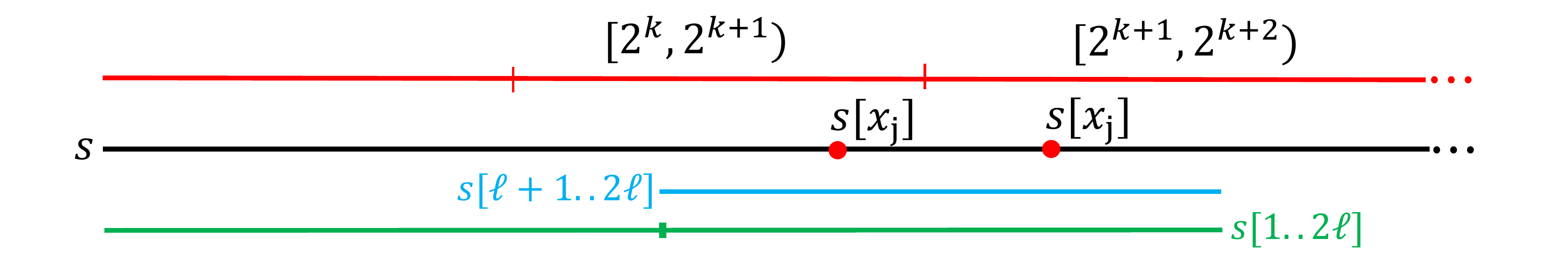}}
\caption{The {\opsquare} $s[1..2\ell]$ is colored in green. The red points are the two possibilities for $s[x_j]$. $s[\ell+1..2\ell]$ is colored in blue. The red line illustrates the ranges.}
\label{fig:lemma17}
\end{figure}

\begin{lemma}
\label{smallerorlargest}
Consider an {\opsquare} $s[1..2\ell]$. Then there exists $i<j$ such that $x_{j}\in [\ell+1,2\ell]$, $x_{j}-x_{i}=\ell$
and $x_{j}$ is either the smallest or the largest element of its group.
\end{lemma}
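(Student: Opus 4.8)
The plan is to start from the guarantee provided by Proposition~\ref{kociumaka}: any {\opsquare} $s[1..2\ell]$ yields a pair of leftmost occurrences $i < j$ with $x_j \in [\ell+1, 2\ell]$ and $x_j - x_i = \ell$. If $x_j$ is already the smallest or largest element of its group $L_k$, we are done, so assume it is neither. The idea is to ``slide'' to a different witness for the same {\opsquare} by moving $x_j$ toward one end of its group. Concretely, let $y$ be the smallest element of $L_k$ and $z$ the largest; I would like to argue that at least one of $y$ or $z$ is again a valid right endpoint, i.e., that $y - \ell$ (resp.\ $z - \ell$) is itself a leftmost occurrence and still lies inside $u = s[1..\ell]$. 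The key structural fact I will invoke is that $s[1..2\ell]$ being an {\opsquare} means $u \op v$ where $v = s[\ell+1..2\ell]$; as used repeatedly in Section~\ref{upperop}, order-isomorphism transports the ``leftmost'' property: if position $p \in [\ell+1, 2\ell]$ is a leftmost occurrence (so $s[p]$ differs from all earlier characters, in particular from $s[p-\ell]$), then $s[p'-\ell] \neq s[p-\ell]$ for all $p' \in [\ell+1, p)$, so $p - \ell$ is also a leftmost occurrence.

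Next I would check the range constraints. Since $x_j \in [\ell+1, 2\ell]$ and all of $y, x_j, z$ lie in $L_k = [2^k, 2^{k+1})$, the difference $|x_j - y|$ and $|z - x_j|$ are each at most $2^k \le 2 x_j \le \cdots$; more carefully, one of $y$ or $z$ is within distance $\min(x_j - 2^k,\ 2^{k+1}-1-x_j)$, and because $x_j \ge \ell+1 > \ell/2 \cdot 2$ one can bound the displacement so that the shifted position $y - \ell$ (resp.\ $z - \ell$) stays within $[1, \ell]$, i.e., inside $u$. The point is that the right endpoint only needs to move within one dyadic block, whose length $2^k$ is comparable to $\ell$, while $x_j$ starts comfortably in the second half $[\ell+1,2\ell]$ of the {\opsquare}; this is exactly the slack exploited in Lemma~\ref{leftposition} and in the $|O_k| \le 10$ case of Lemma~\ref{op_in_s}. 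So I expect to show: either $y \in [\ell+1, 2\ell]$ with $y - \ell$ a leftmost occurrence in $u$, or $z$ has the same property, and in either case the new pair $(y-\ell, y)$ or $(z-\ell, z)$ witnesses the same {\opsquare} $s[1..2\ell]$ with the right endpoint being an extreme element of its group.

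The main obstacle I anticipate is verifying that at least one endpoint of $L_k$ actually falls in $[\ell+1, 2\ell]$ after shifting — it is conceivable that $y < \ell+1$ (so $y-\ell \le 0$, falling outside $u$) while simultaneously $z > 2\ell$ (so $z$ is outside $v$). Handling this requires a case analysis on where $\ell+1$ and $2\ell$ sit relative to the block $[2^k, 2^{k+1})$: since $x_j \in [\ell+1, 2\ell] \cap L_k$, the interval $[\ell+1, 2\ell]$ meets $L_k$, and I would argue that whenever $y < \ell+1$, the element $\ell+1$ itself (or the smallest leftmost occurrence $\ge \ell+1$) together with the fact that $2\ell < 2^{k+1}$ forces $z \in [\ell+1, 2\ell]$; symmetrically if $z > 2\ell$. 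In other words the two ``bad'' situations cannot occur together because $[\ell+1,2\ell]$ has length $\ell \approx 2^k$, which is not too small compared to the block, so at least one endpoint of $L_k$ is captured. Once that bookkeeping is done, the order-isomorphism transport argument finishes the proof cleanly.
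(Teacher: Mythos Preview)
Your overall strategy is salvageable but the proposal as written has a concrete error, and it is also more convoluted than the paper's argument.

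\textbf{The error.} In the crucial case analysis you assert ``the fact that $2\ell < 2^{k+1}$'' while assuming $y<\ell+1$. But $y\in L_k$ means $y\ge 2^{k}$, so $y<\ell+1$ forces $\ell\ge 2^{k}$ and hence $2\ell\ge 2^{k+1}$, the \emph{opposite} inequality. So the mechanism you propose for ruling out the double-bad case ($y<\ell+1$ and $z>2\ell$) does not work as stated. The correct one-line argument is: if $y<\ell+1$ then (as above) $\ell\ge 2^{k}$; if $z>2\ell$ then $2\ell<z<2^{k+1}$ gives $\ell<2^{k}$; these two are contradictory, so at least one of $y,z$ lies in $[\ell+1,2\ell]$. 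Once you have that, the order-isomorphism transport of ``leftmost'' (which you describe correctly) finishes the proof.

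\textbf{Comparison with the paper.} The paper never picks up the specific pair from Proposition~\ref{kociumaka} and tries to slide. Instead it fixes $k$ purely from $\ell$ (the unique $k$ with $2^{k}\le \ell<2^{k+1}$), observes that then $2^{k+1}\in[\ell+1,2\ell]$ and $[\ell+1,2\ell]\subseteq[2^{k},2^{k+2})$, so the right arm is a suffix of the $k$-th dyadic block followed by a prefix of the $(k{+}1)$-st. Any leftmost occurrence in the right arm (whose existence is all that Proposition~\ref{kociumaka} is used for) lands in one of the two pieces; in the first case $\max L_k\in[\ell+1,2\ell]$, in the second $\min L_{k+1}\in[\ell+1,2\ell]$. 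This sidesteps your ``double-bad'' worry entirely, because the choice of $k$ is made so that the block boundary $2^{k+1}$ sits inside $[\ell+1,2\ell]$. Your route reaches the same conclusion but only after the extra contradiction argument above; the paper's choice of $k$ from $\ell$ is what makes the bookkeeping disappear.
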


\begin{proof}
By Proposition~\ref{kociumaka}, we know that there is a leftmost occurrence in $s[\ell+1..2\ell]$. Choose the largest $k$
such that $2^{k}\leq \ell$ (so $2^{k+1} > \ell$). Consider two ranges $[2^{k},2^{k+1})$ and $[2^{k+1},2^{k+2})$
corresponding to groups $L_{k}$ and $L_{k+1}$, respectively. Because $2^{k} \leq \ell$ and $2^{k+1} > \ell$, we have
$2^{k} < \ell+1$, $2^{k+1} \in [\ell+1,2\ell]$ and $2\ell < 2^{k+2}$. Consequently, $s[\ell+1..2\ell]$ can be represented
as the concatenation of a suffix of $s[2^{k}..2^{k+1})$ and a prefix of $s[2^{k+1}..2^{k+2})$. The leftmost occurrence
that falls within $s[\ell+1..2\ell]$ belongs to the suffix or the prefix. See Figure~\ref{fig:lemma17}. If it falls within the suffix, the largest element of $L_{k}$ belongs to $[\ell+1,2\ell]$. If it falls within the prefix, the smallest element of $L_{k+1}$ belongs to $[\ell+1,2\ell]$.
Let $x_{j}\in [\ell+1,2\ell]$ be the corresponding leftmost occurrence. To complete the proof we need to establish that there
exists $i<j$ such that $x_{j}-x_{i}=\ell$. $s[x_{j}]$ is distinct from all $s[1],s[2],\ldots,s[x_{j}-1]$, and by $s[1..\ell]~\op s[\ell+1..2\ell]$ we obtain that $s[x_{j}-\ell]$ is distinct from all $s[1],s[2],\ldots,s[x_{j}-\ell-1]$. Thus, $x_{j}-\ell$ is a leftmost occurrence, hence $x_{j}-\ell=x_{i}$ for some $i<j$ as required.
\end{proof}

To generate the candidates, we iterate over all $j$ such that $x_{j}$ is the smallest or largest element of its
group $L_{k}$. Consider $i<j$ such that $x_{j}-x_{i}=\ell$ and $x_{j}\in [\ell+1,2\ell]$ for an {\opsquare}
$s[1..2\ell]$. Then, because $x_{j} \in [2^{k},2^{k+1})$, $\ell \geq 2^{k-1}$. To avoid clutter, let $s'=s[1..2^{k-1}]$.
Because $s[1..2\ell]$ is assumed to be an {\opsquare}, $s[\ell..\ell+2^{k-1}-1] \op  s'$.
This suggests the following natural strategy to generate the candidates: we iterate over all fragments $s[y..y+2^{k-1}-1]$
such that $x_{j}\in [y,y+2^{k-1}-1]$ and $s'\op  s[y..y+2^{k-1}-1]$, and output $s[1..2(y-1)]$ as a possible {\opsquare}
(as explained earlier, each such candidate is then tested in constant time). See Figure~\ref{fig:lemma18}. We first establish that the number of such
fragments can be upper bounded by $\Oh(1+|L_{k-2}|)$, and then explain how to generate them in the same time complexity.
\begin{figure}[ht]
\centerline{\includegraphics[scale=.41]{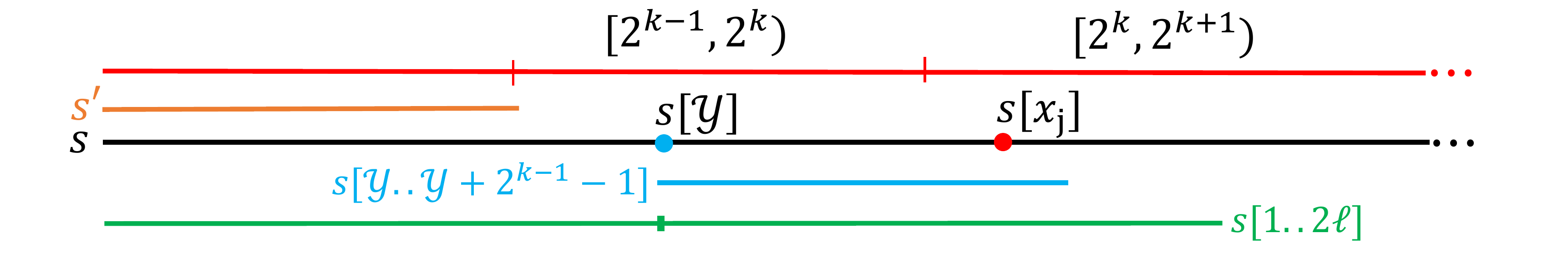}}
\caption{The {\opsquare} $s[1..2\ell]$, the fragment $s[y..y+2^{k-1}-1]$, and $s$ are colored in green, blue, and black, respectively. $s'$ is colored in orange. The red line illustrates the ranges.}
\label{fig:lemma18}
\end{figure}
\begin{lemma}
\label{fewoccurrences}
The number of fragments $s[y..y+2^{k-1}-1]$ such that $x_{j}\in [y,y+2^{k-1}-1]$ and $s'\op  s[y..y+2^{k-1}-1]$ is
upper bounded by $\Oh(1+|L_{k-2}|)$.
\end{lemma}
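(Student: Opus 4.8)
The plan is to reduce the count of such fragments to the count of leftmost occurrences in a fixed range, exactly in the spirit of the upper-bound argument from Section~\ref{upperop}. First I would observe that if there is at most one fragment $s[y..y+2^{k-1}-1]$ that is order-isomorphic to $s'$ and contains $x_j$, we are trivially within the claimed bound, so assume there are at least two, say starting at positions $y_1 < y_2$ with both containing $x_j$. Since both fragments have length $2^{k-1}$ and both contain the position $x_j$, their starting positions differ by less than $2^{k-1}$, i.e. $\Delta := y_2 - y_1 < 2^{k-1}$. The standard transitivity-of-$\op$ computation then shows $\Delta$ is an op-border of $s[y_1..y_1+2^{k-1}-1]$: indeed $s[y_1 .. y_1 + 2^{k-1}-1-\Delta]$ equals a prefix of $s'$-isomorphic content, which is order-isomorphic to a prefix of $s[y_2..y_2+2^{k-1}-1]$, namely $s[y_2..y_2+2^{k-1}-1-\Delta] = s[y_1+\Delta .. y_1+2^{k-1}-1]$. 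So by Proposition~\ref{borderperiods}, $s[y_1..y_1+2^{k-1}-1]$ decomposes into blocks of length $\Delta$.

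Next I would argue that each of the multiple starting positions $y_1 < y_2 < \ldots < y_r$ (all giving fragments $\op s'$ and containing $x_j$) are spaced at consecutive differences that are themselves op-borders, and in particular the minimal gap $\Delta$ between consecutive $y$'s satisfies: the number $r$ of such fragments is roughly $2^{k-1}/\Delta$ (they all lie within a window of length about $2 \cdot 2^{k-1}$ around $x_j$). Then, using Lemma~\ref{leftblocks} and Lemma~\ref{rightblocks} with the op-border $\Delta$ of the fragment $s[y_1..y_1+2^{k-1}-1]$ — after first checking, via Proposition~\ref{kociumaka} applied to the op-square $s[1..2\ell]$, that some block of this decomposition (other than the first) contains a leftmost occurrence — I conclude that every block of length $\Delta$ inside $s[y_1..y_1+2^{k-1}-1]$ contains a leftmost occurrence. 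Since $x_j \in [2^k, 2^{k+1})$ and the window around it has length $O(2^{k-1})$, a constant fraction of these $\Theta(2^{k-1}/\Delta)$ blocks are fully contained in the range $[2^{k-2}, 2^{k-1})$ or $[2^{k-1},2^k)$ (whichever side of $x_j$ they fall on), so $|L_{k-2}| + |L_{k-1}| = \Omega(2^{k-1}/\Delta) = \Omega(r)$, giving $r = O(1 + |L_{k-2}|)$ after absorbing the $|L_{k-1}|$ term appropriately (or restating the bound with both groups, which is still $O(1+|L_{k-2}|+|L_{k-1}|)$ and suffices for the summation).

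The subtle point I would need to nail down is the hypothesis required to invoke Lemma~\ref{rightblocks}: it needs a leftmost occurrence in block $b_2$ of the decomposition of $s[y_1..y_1+2^{k-1}-1]$, and Lemma~\ref{leftblocks} propagates a leftmost occurrence leftward. So I must produce a leftmost occurrence in some block $b_j$ with $j \ge 2$. This is where Proposition~\ref{kociumaka} enters: the op-square $s[1..2\ell]$ has a leftmost occurrence $x_j \in [\ell+1,2\ell]$ with counterpart $x_i = x_j - \ell$ also a leftmost occurrence. If $x_j$ (or its image under shifting by multiples of $\Delta$ into $s[y_1..y_1+2^{k-1}-1]$) lands in a non-first block, we are done; otherwise a short position-counting argument like the one in Lemma~\ref{leftposition} shows that the shifted copy of $x_j$ by one period $\Delta$ must be a leftmost occurrence landing in block $b_2$. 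I expect verifying the exact index arithmetic here — confirming $x_j$ itself lies inside $s[y_1..y_1+2^{k-1}-1]$ (it does, since that fragment contains $x_j$ by hypothesis) and that shifting by $\Delta$ stays inside — to be the main obstacle, together with carefully matching the window widths so the constant-fraction-of-blocks-in-range claim goes through with the clean bound $\Oh(1+|L_{k-2}|)$.
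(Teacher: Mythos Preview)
Your overall strategy---pigeonhole to find a small gap $\Delta$, deduce an op-border, decompose into blocks, and count leftmost occurrences per block---matches the paper's approach. But you set up the decomposition on the wrong string, and this causes every subsequent step to misfire.

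The paper does \emph{not} decompose $s[y_1..y_1+2^{k-1}-1]$. Instead, after observing (exactly as you do) that two overlapping fragments $s[y_i..y_i+2^{k-1}-1]$ and $s[y_{i+1}..y_{i+1}+2^{k-1}-1]$ are both order-isomorphic to $s'=s[1..2^{k-1}]$, it notes that the resulting op-border of the fragment \emph{transfers to $s'$ itself}, because $s'\op s[y_i..y_i+2^{k-1}-1]$. So the blocks $b_1,\ldots,b_f$ live inside $s'=s[1..2^{k-1}]$, which is a genuine prefix of $s$. This is essential: Lemmas~\ref{leftblocks} and~\ref{rightblocks} are stated for $u=s[1..|u|]$, so applying them to your non-prefix fragment $s[y_1..y_1+2^{k-1}-1]$ is not licensed. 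Moreover, once the blocks sit in $[1,2^{k-1}]$, the count of blocks fully inside $[2^{k-2},2^{k-1})$ is immediately $\lfloor 2^{k-2}/\Delta\rfloor-1$, giving the clean bound $\Oh(1+|L_{k-2}|)$ without any need to drag in $|L_{k-1}|$ as you anticipate having to do.

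Your mechanism for producing a leftmost occurrence in a non-first block is also more complicated than needed. You invoke Proposition~\ref{kociumaka} on ``the op-square $s[1..2\ell]$'', but the lemma must hold for the algorithm's candidate-generation step, where no such op-square is yet known to exist. The paper instead uses only that $x_j$ is a leftmost occurrence (which is given). Since $x_j$ lies in the overlap $[y_{i+1},y_i+2^{k-1}-1]$, its relative position in $s[y_i..y_i+2^{k-1}-1]$ is at least $\Delta+1$; transporting via the isomorphism $s'\op s[y_i..y_i+2^{k-1}-1]$ yields a leftmost occurrence at position $x_j-y_i+1\geq \Delta+1$ in $s'$, hence in some $b_j$ with $j\geq 2$. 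That single transfer replaces your entire appeal to Proposition~\ref{kociumaka} and the shifting argument.
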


\begin{proof}
Consider all such fragments $s[y_{1}..y_{1}+2^{k-1}-1], s[y_{2}..y_{2}+2^{k-1}-1], \ldots, s[y_{t}..y_{t}+2^{k-1}-1]$.
Because $x_{j}\in [y_{i},y_{i}+2^{k-1}-1]$ for every $i=1,2,\ldots,t$, either $t=1$ or by the pigeonhole principle there
exists $i$ such that $y_{i+1}-y_{i} < 2^{k-1}/(t-1)$. If $t=1$ then we are done. Otherwise, let $\ell'= |s[y_{i+1}..y_{i}+2^{k-1}-1]|$.
By assumption, $s'\op s[y_{i}..y_{i}+2^{k-1}-1]$ and $s'\op s[y_{i+1}..y_{i+1}+2^{k-1}-1]$, so by the transitivity of $\op $
also $s[y_{i}..y_{i}+2^{k-1}-1]\op s[y_{i+1}..y_{i+1}+2^{k-1}-1]$.
We conclude that $s'[1..\ell']\op s'[y_{i+1}-y_{i}+1..2^{k-1}]$, or in other words $\ell'$ is an op-border of $s'$.
Let $b_{1},b_{2},\ldots,b_{f},b_{f+1}$ be the blocks defined by $\ell'$ in $s'=w[i..i+|s'|-1]$, where each block
is of length $\Delta=2^{k-1}-\ell'$. See Figure~\ref{fig:lemma18blocks}.
Recall that $x_{j}$ is a leftmost occurrence in $w[i..n]$, and by the definition of $y_{i}$ and $y_{i+1}$ we have $x_{j}\in [y_{i+1},y_{i}+2^{k-1}-1]$.
Then, by $s'[1..\ell']\op s'[y_{i}..y_{i}+2^{k-1}-1]$ we obtain that
$x_{j}-y_{i}+1 \in [y_{i+1}-y_{i}+1,2^{k-1}]$ is also a leftmost occurrence in $w[i..n]$.
Hence, we have a leftmost occurrence in $b_{j}$, for some $j\geq 2$.
This allows us to apply Lemma~\ref{leftblocks} and then Lemma~\ref{rightblocks} to
conclude that there is a leftmost occurrence in every block $b_{1},b_{2},\ldots,b_{f}$.
We calculate a lower bound on how many of these leftmost occurrences fall within the range $[2^{k-2},2^{k-1})$:
\begin{align*}
\left\lfloor \frac{2^{k-2}}{\Delta} \right\rfloor - 1  &> \frac{2^{k-2}}{2^{k-1}-\ell'}-2 \\
& = \frac{2^{k-2}}{2^{k-1}-(y_{i}+2^{k-1}-y_{i+1})}-2 = \frac{2^{k-2}}{y_{i+1}-y_{i}}-2 \\
& >  \frac{2^{k-2}}{2^{k-1}/(t-1)}-2 = (t-5)/2.
\end{align*}
For $t<6$, we are done as the number of fragments is $\Oh(1)$. Otherwise, we obtain that $|L_{k-2}| \geq (t-5)/2 \geq t/12$,
thus $t=\Oh(1+|L_{k-2}|)$ always holds as claimed.
\end{proof}
\begin{figure}[ht]
\centerline{\includegraphics[scale=.41]{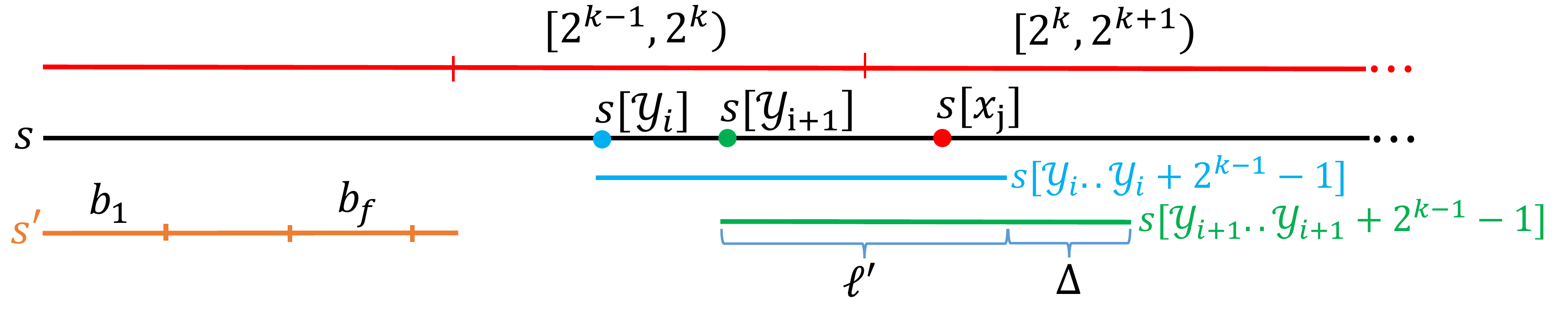}}
\caption{$s$, $s'$,$s[y_{i}..y_{i}+2^{k-1}-1]$, and $s[y_{i+1}..y_{i+1}+2^{k-1}-1]$ are colored in black, orange, blue, and green, respectively. The red line illustrates the ranges.}
\label{fig:lemma18blocks}
\end{figure}
Hence, for every $k$ such that $L_{k}$ is non-empty, we generate $\Oh(1+|L_{k-2}|)$ candidates.
The overall number of candidates generated  by following the above strategy is
$\sum_{k : L_{k} \neq \emptyset} \Oh(1+|L_{k-2}|)=
\Oh(\sigma + \sum_{k} |L_{k-2}|)=\Oh(\sigma)$ as promised. It remains to show how to access all
fragments $s[y..y+2^{k-1}-1]$ such that $x_{j}\in [y,y+2^{k-1}-1]$ and $s'\op  s[y..y+2^{k-1}-1]$ in time
proportional to their number.

We will solve a more general problem, and show how to ensure that, when considering $s=w[i..n]$, for every leftmost occurrence
$x_{j}$ in $s$ we have access to a list of all fragments $s[y..y+2^{k-1}-1]$ such that $x_{j}\in [y,y+2^{k-1}-1]$ and
$s[1..2^{k-1}-1]\op  s[y..y+2^{k-1}-1]$, where $2^{k} \leq x_{j} < 2^{k+1}$. We call this list the
\textit{result} for $i$ and $x_{j}$.

Recall that $s=w[i..n]$, and we consider $i=n,n-1,\ldots,1$ in this order. When we consider $s=w[i..n]$, $i$ becomes a leftmost
occurrence and remains to be so until we reach $s=w[\iprev{i}..n]$ such that $w[i]=w[\iprev{i}]$ (possibly, it is a leftmost occurrence till the very end of the scan).
We can calculate $\iprev{i}$ for every $i$ in $\Oh(\sigma n)$ time by maintaining a list of leftmost occurrences as described earlier.
We say that a position $i$ is $k$-active at position $i'$ when $i'\in [\iprev{i},i]$ and $2^{k} \leq i-i'+1 < 2^{k+1}$.
We observe that, as we consider longer and longer suffixes of $w$, $i$ is first $0$-active, then $1$-active, and so
on until it becomes $k_{i}$-active, and then it is never active again.
Further, indices $i'$ such that $i$ is $k$-active at $i'$ form a contiguous range
$[\ikstart{i}{k},\ikend{i}{k}]$ (the length of each such range is $2^{k}$, except possibly for $k=k_{i}$ when it is shorter). The total
length of these ranges is small as shown below.
\begin{proposition}
\label{smallarrays}
$ \sum_{i,k : k \leq k_{i}} 2^{k} = \Oh(\sigma n) $
\end{proposition}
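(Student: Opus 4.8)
The plan is to bound the sum by charging each term $2^k$ to positions in the suffix $w[i..n]$, and then observe that every position is charged only $\Oh(\sigma)$ times in total. First I would fix $i$ and sum over $k$: since $i$ is $k$-active at $i'$ exactly when $i'\in[\iprev{i},i]$ and $2^k\le i-i'+1<2^{k+1}$, the values of $k$ for which $i$ is ever active are $k=0,1,\ldots,k_i$, where $2^{k_i}\le i-\iprev{i}+1<2^{k_i+1}$ (or $k_i$ is the largest $k$ with $2^k\le i-1+1=i$ in the boundary case where $i$ stays leftmost to the end). Hence $\sum_{k\le k_i}2^k< 2^{k_i+1}\le 2(i-\iprev{i}+1)$, so it suffices to prove $\sum_i (i-\iprev{i}+1)=\Oh(\sigma n)$, i.e.\ $\sum_i(i-\iprev{i})=\Oh(\sigma n)$.

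The key observation is that the interval $(\iprev{i},i]$ is exactly the set of suffix-start positions $i'$ for which $i$ is a leftmost occurrence in $w[i'..n]$; equivalently, $i-\iprev{i}$ counts the number of suffixes $w[i'..n]$ (with $\iprev{i}<i'\le i$) in whose leftmost-occurrence list $i$ appears. Therefore $\sum_i(i-\iprev{i})$ equals the total size $\sum_{i'=1}^{n}|\{\text{leftmost occurrences in }w[i'..n]\}|$, summed over all suffixes. But each suffix $w[i'..n]$ has at most $\sigma$ leftmost occurrences (one per distinct character), so this double count is at most $\sigma n$. Combining with the first paragraph gives $\sum_{i,k:k\le k_i}2^k< 2\sum_i(i-\iprev{i}+1)\le 2(\sigma n + n)=\Oh(\sigma n)$.

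The only mildly delicate point — and the one I would state carefully — is the boundary case where $i$ remains a leftmost occurrence all the way down to $w[1..n]$ (so there is no $\iprev{i}$ with $w[\iprev{i}]=w[i]$); there one sets $\iprev{i}=0$, and the same counting goes through since such $i$ is leftmost in all of $w[1..n],\ldots,w[i..n]$, contributing $i$ to the double count, still within the per-suffix budget of $\sigma$. No real obstacle arises: the whole argument is a two-line double-counting once the identification of $(\iprev{i},i]$ with "the suffixes for which $i$ is leftmost" is made explicit.
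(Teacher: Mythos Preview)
Your proof is correct and follows essentially the same approach as the paper: both arguments reduce to the observation that $\sum_i (i-\iprev{i})$ counts the total number of (position, suffix) pairs where the position is a leftmost occurrence in the suffix, which is at most $\sigma n$. Your presentation is marginally more direct---you collapse the geometric series $\sum_{k\le k_i}2^k<2^{k_i+1}\le 2(i-\iprev{i}+1)$ first, whereas the paper bounds each $2^k$ by twice the length of the $(k{-}1)$-range and then sums the ranges---but the underlying double-counting is identical.
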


\begin{proof}
For $k\geq 1$ we can upper bound $2^{k}$ by $2\cdot |[\ikstart{i}{k-1},\ikend{i}{k-1}]|$. Then the sum becomes:
\[ \sum_{i,k : k \leq k_{i}} 2^{k} = n + 2\sum_{i,k : 1 \leq k \leq k_{i}} |[\ikstart{i}{k-1},\ikend{i}{k-1}]|  \leq  n + 2 \sum_{i,k : k \leq k_{i}} |[\ikstart{i}{k},\ikend{i}{k}]| .\]
We observe that every $i'\in [\ikstart{i}{k},\ikend{i}{k}]$ corresponds to $i-i'+1$ being a leftmost occurrence in $w[i'..n]$.
Because there are at most $\sigma$ leftmost occurrences in any $w[i'..n]$, this allows us to upper bound the sum by $\Oh(\sigma n)$.
\end{proof}

This allows us to physically store the results as follows. For every $x$, we have an array indexed by $k\leq k_{i}$, denoted
$R[x]$.
Each entry of this array is an array indexed by $i\in[\ikstart{i}{k},\ikend{i}{k}]$, denoted $R[x][k]$. Finally, each entry of that
array, denoted $R[x][k][i]$, is a pointer
to a list of $y$s such that $x\in [y,y+2^{k-1}-1]$ and $w[i..i+2^{k-1}-1]\op  w[y..y+2^{k-1}-1]$ (note that it is a pointer
to a list and not its separate physical copy). The arrays allow us to access the result for every $x_{j}$, $k\leq k_{x_{j}}$ and $i$
in constant time, by retrieving the pointer $R[x_{j}][k][i]$ (where we first verify that if $i\in[\ikstart{i}{k},\ikend{i}{k}]$).
The total length of all arrays $R[x]$ is only $\Oh(\sigma n)$ by Proposition~\ref{smallarrays}.
Further, the total length of all lists of occurrences that
we need to prepare (assuming that we store every $R[x][i]$ as a pointer to such a list and not their physical copies) is also
$\Oh(\sigma n)$ by the following argument. 
Consider $i$ and $k\leq k_{i}$. Then, we need a list of positions $y$ such that $i\in [y,y+2^{k-1}-1]$ and
$w[y..y+2^{k-1}-1]$ is order-isomorphic to a specific string $s'$. Thus, we can partition all positions $y$
such that  $i\in [y,y+2^{k-1}-1]$ into groups corresponding to order-isomorphic fragments $w[y..y+2^{k-1}-1]$,
and then store a pointer to the appropriate list (possibly null, if there is no $y$). The total number of positions $y$,
over all $i$ and $k\leq k_{i}$, is $\Oh(\sigma n)$ by Proposition~\ref{smallarrays}, which bounds the total length
of all the lists.

It remains to describe how to efficiently calculate the results. This requires partitioning all fragments $w[y..y+2^{k-1}-1]$ such
that $i\in [y,y+2^{k-1}-1]$ and $k\leq k_{i}$ into order-isomorphic groups, and finding for every $i, k\leq k_{i}, i'\in [\ikstart{i}{k},\ikend{i}{k}]$
a pointer to the list of fragments $w[y..y+2^{k-1}-1]$ with $i\in [y,y+2^{k-1}-1]$ that are order-isomorphic to
$w[i'..i'+2^{k-1}-1]$. Both steps can be implemented with the order-preserving suffix tree that is preprocessed in $\Oh(n)$ time
and space for computing a (deterministic) fingerprint of any $\code(w[x..x+2^{\ell}-1])$ in constant time. Here, a fingerprint
is meant as an integer consisting of $\Oh(\log n)$ bits, denoted $\fingerprint_{\ell}(x)$, such that 
$\fingerprint_{\ell}(x) = \fingerprint_{\ell}(x')$ iff $\code(w[x..x+2^{\ell}-1]) = \code(w[x'..x'+2^{\ell}-1])$
(or equivalently $w[x..x+2^{\ell}-1] \op  w[x'..x'+2^{\ell}-1]$). We first describe such a mechanism and then provide
a more detailed description of how to apply it.

\begin{restatable}{lemma}{weightedancestors}
\label{fingerprints}
A compacted trie on $n$ leaves can be preprocessed in $\Oh(n)$ time, so that for any leaf $u$ and integer $k$ we can query
in constant time for a $\Oh(\log n)$-bit fingerprint of the ancestor of $u$ at string depth $2^{k}$.
\end{restatable}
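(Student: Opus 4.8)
Lemma~\ref{fingerprints} asks: preprocess a compacted trie on $n$ leaves in $\Oh(n)$ time so that, given a leaf $u$ and integer $k$, we can return in $\Oh(1)$ time a $\Oh(\log n)$-bit fingerprint identifying the (implicit or explicit) ancestor of $u$ at string depth $2^{k}$. Crucially, two such ancestors must receive equal fingerprints if and only if they correspond to the same string, so the fingerprint is really a name for the string $\code(w[x..x+2^k-1])$.

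**The plan.** The natural approach is a weighted-ancestor–style query combined with a renaming of the relevant locations. First I would restrict attention to the string depths that can ever be queried, namely the powers of two $2^0,2^1,\ldots,2^{\lfloor\log n\rfloor}$. For each such depth $d=2^k$, the set of locations of the trie at string depth exactly $d$ that lie on a root-to-leaf path is what we must name; there are at most $n$ of them per depth if we count only those that are \emph{branching-relevant}, but more carefully, each leaf contributes one ancestor per depth, so $\Oh(n\log n)$ pairs $(u,k)$ in total — too many to tabulate directly. The trick is to name by \emph{edge and offset}: the ancestor of $u$ at string depth $2^k$ lies on some edge $e$ of the compacted trie (the edge whose lower endpoint is the first explicit node of string depth $\geq 2^k$ on the root-to-$u$ path), at a known offset inside $e$. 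So the fingerprint can be taken to be the pair (identifier of $e$, offset), and the offset is determined by $2^k$ minus the string depth of the upper endpoint of $e$. Two locations give the same string iff they are the same edge-offset pair — this is automatic because distinct locations in a trie spell distinct strings. Hence we only need to, for each leaf $u$ and each $k$, identify the edge $e$ containing the depth-$2^k$ ancestor, in $\Oh(1)$ time and with $\Oh(n)$ preprocessing.

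**Reducing to a predecessor/weighted-ancestor structure.** Finding, for a leaf $u$ and a target string depth $d$, the lowest explicit ancestor of $u$ with string depth $\geq d$ is exactly a weighted ancestor query on the trie, where node weights are string depths. In general weighted ancestors need more than $\Oh(1)$ per query with linear preprocessing, but here the targets are restricted to the $\Oh(\log n)$ powers of two, which is what makes $\Oh(1)$ achievable. I would do a single DFS of the trie; along the current root-to-node path maintain the stack of explicit nodes together with their string depths. When we reach a leaf $u$, for each $k$ we want the deepest stacked node of string depth $\geq 2^k$ — equivalently a predecessor query for $2^k$ in the sorted-by-depth list of ancestors. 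Since the depths along the path are increasing, this is a predecessor query in a sorted array of length up to $n$, and we only need it for the $\Oh(\log n)$ values $2^k$. The standard way to get $\Oh(1)$ amortized here is: as the DFS pushes and pops nodes, maintain for each $k$ a pointer $p_k$ to the deepest current ancestor with string depth $< 2^k$ (so its child on the path is the answer's edge). On a push of a node $v$ of string depth $d_v$, advance $p_k$ for exactly those $k$ with $2^k > d_v$ that are affected; on a pop, retract symmetrically. The total work is $\Oh(1)$ per $(v,k)$ pair with $2^k$ crossing $d_v$ during the DFS, and since each edge is traversed once in each direction, a careful accounting gives $\Oh(n + \text{(number of }(v,k)\text{ updates)})$; one shows this is $\Oh(n)$ because each of the $\Oh(\log n)$ pointers $p_k$ only moves monotonically along a root-to-node path and back, so it is traversed $\Oh(1)$ times per edge on average... this is the delicate part. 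An alternative, cleaner route that sidesteps the amortization subtlety: use the classical fact that a tree can be preprocessed in $\Oh(n)$ time for level-ancestor queries, and reduce string-depth weighted ancestors at power-of-two targets to level-ancestor queries by, for each leaf, precomputing (via the same DFS) the explicit-node-depth at which each $2^k$ is first reached — but that still requires $\Oh(\log n)$ values per leaf, i.e. $\Oh(n\log n)$ space, which violates the $\Oh(n)$ bound. So the pointer-maintenance DFS, with its amortized analysis, really is the mechanism I would commit to.

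**Main obstacle.** The hard part is achieving the query in genuinely $\Oh(1)$ time per $(u,k)$ with only $\Oh(n)$ preprocessing, since storing an answer for every $(u,k)$ pair is $\Theta(n\log n)$ space. The resolution sketched above is to store, per leaf, not the $\Oh(\log n)$ answers but a compact encoding: during the DFS, when we arrive at leaf $u$, the vector $(p_0,p_1,\ldots,p_{\lfloor\log n\rfloor})$ of pointers into the ancestor stack is monotone, and consecutive leaves in DFS order share a long common prefix of their ancestor stacks, so the vector changes by only $\Oh(1)$ amortized entries between consecutive leaves — hence we can afford to store, for each leaf, just the $\Oh(1)$ amortized changed coordinates, and reconstruct on demand is unnecessary since we answer queries during the DFS itself or via a persistent-array representation of the $p_k$ vector. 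Thus the plan is: (1) fix fingerprints to be edge-identifier/offset pairs, which are automatically sound and are $\Oh(\log n)$ bits; (2) reduce to locating, per leaf and per power-of-two depth, the containing edge; (3) answer all such via a single DFS maintaining $\Oh(\log n)$ monotone pointers into the ancestor stack, with an amortized $\Oh(n)$ total-work argument; (4) make the per-leaf answers accessible in $\Oh(1)$ either by answering queries offline during the DFS or by a persistent-array snapshot, keeping total space $\Oh(n)$.
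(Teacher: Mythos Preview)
Your fingerprint definition is essentially the paper's: the paper returns the identifier of the explicit node, or of the containing edge if the ancestor is implicit, and observes that this fits in $[2n]$. Your offset is redundant (two ancestors at the \emph{same} string depth $2^{k}$ lying on the same edge are the same implicit node), but harmless. The real content of the lemma is therefore the subproblem you isolate: given a leaf and a power-of-two target depth, locate the containing edge in $\Oh(1)$ after $\Oh(n)$ preprocessing. On this point the paper simply cites an existing result (Gawrychowski, ESA~2011; full version Lemma~12) and does not re-derive it.

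Your attempt to derive it from scratch has a gap. The DFS maintaining pointers $p_{k}$ to the deepest stacked ancestor of string depth $<2^{k}$ does \emph{not} run in $\Oh(n)$ total time as you claim. On a push of $v$ with string depth $d_{v}$, every $p_{k}$ with $2^{k}>d_{v}$ must advance to $v$; on the matching pop they all retract. Take a complete binary compacted trie on $n$ leaves with all edge labels of length~$1$: every string depth is at most $\log n$, so for each of the roughly $\log n$ indices $k$ with $2^{k}>\log n$ the pointer $p_{k}$ coincides with the DFS stack top at all times and therefore moves $\Theta(n)$ times. The total is $\Theta(n\log n)$. Your amortization sentence (``each pointer is traversed $\Oh(1)$ times per edge on average'') is exactly what fails here: for large $k$ the pointer touches \emph{every} edge. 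The later fix via a persistent array of the vector $(p_{0},\ldots,p_{\lfloor\log n\rfloor})$ inherits the same cost, since the number of single-coordinate updates to that vector over the DFS is $\Theta(n\log n)$ on this instance, so both the build time and the space exceed $\Oh(n)$.

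There is a genuine idea hiding in your sketch---the $p_{k}$'s are monotone in $k$, so a push only changes a \emph{suffix} of the vector and one could try to maintain it as a run-length encoding rather than coordinate-wise---but you do not develop it, and turning that observation into an $\Oh(n)$-preprocessing, $\Oh(1)$-query structure is precisely the nontrivial content of the cited result. As written, your plan establishes the lemma with $\Oh(n\log n)$ preprocessing, which is not enough for the paper's application (the interesting regime is $\sigma=o(\log n)$).
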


\begin{proof}
This follows by applying the method used to solve the substring fingerprint problem mentioned in~\cite[Lemma 14]{Gawrychowski11}.
Following the description in the full version~\cite[Lemma 12]{Gawrychowski11arxiv}, a compacted trie on $n$ leaves
can be preprocessed in $\Oh(n)$ time so that we can locate the (implicit or explicit) node corresponding to the ancestor
at string depth $2^{k}$ of a given leaf in constant time. If the sought node is implicit (and does not physically exist in the compacted
trie) we retrieve the edge that contains it. Next, if the node is explicit then we return its identifier. If the node is implicit then
we return the identifier of the edge that contains it. Thus, the required range of identifiers is $[2n]$.
\end{proof}

We apply Lemma~\ref{fingerprints} on the order-preserving suffix tree. This allows us to calculate any $\fingerprint_{\ell}(x)$
with the required properties in constant time. Now consider any $i$ and $k\leq k_{i}$. We first compute
$\fingerprint_{k-1}(y)$ for every $y$ such that $i\in [y,y+2^{k-1}-1]$. This takes $\Oh(2^{k})$ time.  Next, we
compute $\fingerprint_{k-1}(i')$ for every $i'\in [\ikstart{i}{k},\ikend{i}{k}]$, also in $\Oh(2^{k})$ time because
$|\ikstart{i}{k},\ikend{i}{k}]|\leq 2^{k-1}$. We sort all fingerprints and partition them into groups corresponding to order-isomorphic
fragments. We need to implement this step in $\Oh(2^{k})$ time as well. To this end, we observe that we need to sort
$\Oh(2^{k})$ integers consisting of $\Oh(\log n)$ bits, which can be done with radix sort in $\Oh(2^{k}+n)$ time.
To avoid paying $\Oh(n)$ for each $i$ and $k\leq k_{i}$, we observe that this is an offline problem, and all sets
corresponding to different $i$ and $k\leq k_{i}$ can be sorted together. In more detail, we sort tuples of the form
$(i,k_{i},\fingerprint_{\ell}(y),y)$ and $(i,k_{i},\fingerprint_{\ell}(i'),i')$. The total number of all tuples is $\Oh(\sigma n)$
by Lemma~\ref{smallarrays} and, as each of them can be treated as an integer consisting of $\Oh(\log n)$ bits,
they can be sorted in $\Oh(\sigma n+n) = \Oh(\sigma n)$ time. Then, we extract the results for each $i$ and $k\leq k_{i}$
from the output. For each $i$ and $k\leq k_{i}$, we consider every group of equal fingerprints. From each group,
we first create a list containing all positions $y$ corresponding to $\fingerprint_{k-1}(y)$ belonging to the group.
Then, for every $\fingerprint_{k-1}(i')$ belonging to the group we store a pointer to this list.
Overall, this takes $\Oh(\sigma n)$ time and allows us to compute all the results in the same time complexity.

\section{Open Problems}
An interesting follow-up to our results is first bounding the number of order-preserving squares that are not order-isomorphic,
and then designing an algorithm that reports all such squares.

\bibliography{biblio}

\end{document}